\theoremstyle{plain}   %
\newtheorem{theorem}{Theorem} 
\newtheorem{lemma}{Lemma} 
\newtheorem{proposition}{Proposition} 
\theoremstyle{remark}  %
\newtheorem{remark}{Remark}
\journal{Asian Journal of Control}
\begin{document}

\setlength{\abovedisplayskip}{4pt} 
\setlength{\belowdisplayskip}{3pt}

\begin{frontmatter}

\title{Fractional-order controller tuning via minimization of integral of time-weighted absolute error without multiple closed-loop tests}

\author[KYUDAI]{Ansei Yonezawa}\corref{correspendence}\ead{ayonezawa[at]mech.kyushu-u.ac.jp} 
\author[HOKUDAI]{Heisei Yonezawa}
\author[TCU,ISUZU]{Shuichi Yahagi}
\author[HOKUDAI]{Itsuro Kajiwara}
\author[KYUDAI]{Shinya Kijimoto}

\affiliation[KYUDAI]{organization={Department of Mechanical Engineering, Kyushu University},
                     city={Fukuoka},
                     country={Japan}}

\affiliation[HOKUDAI]{organization={Division of Mechanical and Aerospace Engineering, Hokkaido University},
                     city={Sapporo},
                     country={Japan}}
\affiliation[TCU]{organization={Department of Mechanical Engineering, Tokyo City University},
                     city={Setagaya},
                     country={Japan}}
\affiliation[ISUZU]{organization={6th Research Department, ISUZU Advanced Engineering Center Ltd.},
                     city={Fujisawa},
                     country={Japan}}




\tnotetext[]{This work was supported in part by JSPS KAKENHI Grant Numbers JP23K19084 and JP25K17561, and in part by the funding provided by ISUZU Advanced Engineering Center.}
\tnotetext[]{\textcolor{red}{
This is the peer reviewed version of the following article: \\
\textit{A. Yonezawa, H.Yonezawa, S. Yahagi, I. Kajiwara, and S. Kijimoto, Fractional-order controller tuning via minimization of integral of time-weighted absolute error without multiple closed-loop tests, Asian J. Control (2025),1--14, DOI 10.1002/asjc.3788}, \\
which has been published in final form at \url{https://doi.org/10.1002/asjc.3788}. This article may be used for non-commercial purposes in accordance with Wiley Terms and Conditions for Use of Self-Archived Versions. This article may not be enhanced, enriched or otherwise transformed into a derivative work, without express permission from Wiley or by statutory rights under applicable legislation. Copyright notices must not be removed, obscured or modified. The article must be linked to Wiley's version of record on Wiley Online Library and any embedding, framing or otherwise making available the article or pages thereof by third parties from platforms, services and websites other than Wiley Online Library must be prohibited.
}}
\cortext[correspendence]{Correspondins author.}

\begin{abstract}
This study presents a non-iterative tuning technique for a linear fractional-order (FO) controller, based on the integral of the time-weighted absolute error (ITAE) criterion. Minimizing the ITAE is a traditional approach for tuning FO controllers. This technique reduces the over/undershoot and suppresses the steady-state error. In contrast to conventional approaches of ITAE-based controller tuning, the proposed approach does not require multiple closed-loop experiments or model-based simulations to evaluate the ITAE. The one-shot input/output data is collected from the controlled plant. A fictitious reference signal is defined on the basis of the collected input and output signal, which enables us to evaluate the closed-loop response provided by the arbitrary controller parameters. To avoid repeated experiments that are necessary in the conventional approach, we reformulate the ITAE minimization problem using the fictitious reference signal. The desired FO controller parameters minimizing the ITAE are obtained by solving the optimization problem that is based on the fictitious reference signal. The validity of the proposed approach is demonstrated by a numerical study. The avoidance of repeated experiments significantly reduces the development cost of linear FO controllers, thereby facilitating their practical application.
\end{abstract}

\begin{keyword}
  Controller tuning \sep data-driven control \sep fractional-order control \sep integral of time-weighted absolute error \sep linear system \sep optimization  
\end{keyword}

\end{frontmatter}

\section{Introduction} \label{1_Introduction}

\subsection{Motivation} \label{1_1_Motivation}
Fractional-order (FO) calculus \cite{Podlubny1998}, a mathematical theory on non-integer order differentiation and integration, has attracted considerable attention in various fields of science and engineering \cite{Sun2018}. In the field of systems and control engineering, FO calculus is often employed for system modeling \cite{Cheng2023, Xu2025}, observer design \cite{Naifar2022, Jmal2020}, and controller synthesis \cite{Mseddi2024, Wang2023}. 
Stability analysis and stabilization of FO systems have been widely explored (e.g., the Barbalat-type lemma for the conformable FO derivative \cite{BenMakhlouf2023}, practical stabilization of a class of nonlinear generalized conformable FO systems \cite{Gassara2022}).
Various linear FO control techniques have been proposed and their fundamental properties and design methods have been explored, including linear FO active disturbance rejection controllers \cite{Wang2024, Wang2023_ISA}, tilt-integral-derivative controllers \cite{Gnaneshwar2024, Lu2023}, and FO proportional-integral-derivative (FO-PID) controllers \cite{Podlubny1999, Chen2022, Ghorbani2023}. Linear FO controllers can outperform traditional integer-order (IO) controllers in terms of robustness \cite{Monje2023}, control efficiency \cite{Marinangeli2018}, and control performance \cite{Lino2017}. The superiority of FO control can be leveraged for both IO and FO plants \cite{Chen2006}. Therefore, exploring linear FO control methods is important for improving the safety and performance of various automatic control systems in practice. In fact, one study has emphasized the potential advantage of FO-PID controllers in industrial control systems \cite{Tepljakov2021}. Here, it should be noted that proper parameter tuning of FO controllers is one of the most important, albeit challenging procedures for maximizing the benefits of FO control \cite{Tepljakov2021}.

\subsection{Related studies} \label{1_2_Related studies}
Compared with the tuning of IO controllers, the tuning of FO controllers is more challenging owing to the lack of a clear physical interpretation of the FO integral/derivative and the increase in the number of the tunable parameters. To overcome these problems, considerable effort has been devoted toward developing a tuning technique for linear FO controllers. For example, several analytical conditions for FO controller parameters have been derived for specific classes of the controlled plants in the frequency domain, including first-order plus time-delay (FOPTD) systems \cite{Chen2022_ISA, Wu2024}, single FO pole plus time-delay systems \cite{Yumuk2022}, and second-order integrating plants \cite{Shankaran2022}. If these methods are adopted for general controlled plants, the plant must be modeled to meet the specific plant class, inevitably resulting in modeling errors, which often cause significant performance degradation. Moreover, the plant modeling itself imposes a considerable burden on designers and increases the development costs.

The optimization-based tuning strategy in the time domain is an alternative approach for tuning FO controllers. This tuning scheme relies on the control result data instead of the analytical model of the plant, as the time-domain performance index can be directly evaluated from the control result data \cite{Das2011}. Therefore, this approach can be adopted for various control systems, and it overcomes the drawback of the analytical model-based approaches discussed above \cite{Das2011}. Considering the generality and practical usability of this approach, numerous optimization algorithms have been investigated for tuning FO controllers, including particle swarm optimization (PSO) and artificial bee colony algorithms \cite{Bingul2018}, the genetic algorithm \cite{Mondal2020}, the grey wolf optimizer \cite{Faraj2023}, Rao algorithm \cite{Paliwal2022}, and the teaching-learning optimizer \cite{Veerendar2023}. Several studies have developed novel metaheuristics for tuning FO controllers \cite{Hekimoglu2019, Izci2023}. A comprehensive review of the optimization-based FO controller tuning has been presented in \cite{Tepljakov2021}. The typical time-domain performance index includes the integral of the absolute error (IAE), integral of the time-weighted absolute error (ITAE), and integral of the squared error; the detailed characteristics of these criteria have been discussed in \cite{Das2011}. Among them, the ITAE is often employed as the time-domain criterion, as minimizing the ITAE reduces the over/undershoot and suppresses the steady-state error \cite{Veerendar2023}.

Despite the utility of FO controller tuning via optimization as described above, this approach poses a major challenge in the performance evaluation of the controller. The conventional time-domain optimization-based tuning scheme can be outlined as follows:  (1) conduct closed-loop control tests; (2) evaluate the performance criteria using the control result from step (1); (3) determine the next candidate controller based on the evaluation result and the optimization strategy; and (4) repeat steps (1)--(3) until some termination condition is satisfied. Note that the closed-loop control test in step (1) is executed via control experiments using a real-world control system or control simulations using the plant model. Repeating the control experiments in a real-world system is not only tedious but also dangerous in some situations. Model-based control simulations involve costly mathematical modeling of the controlled plant; moreover, the inevitable modeling error in the simulation model causes performance degradation. In other words, the need for the iterative closed-loop control test is a significant obstacle in optimization-based controller tuning. Consequently, a new paradigm in the optimization-based tuning of linear FO controllers is required to improve the utility of this tuning technique. Establishing a simple tuning scheme can facilitate the use of FO controllers in various automatic control systems.

\subsection{Contribution and novelty} \label{1_3_Contribution}
This study proposes a new parameter tuning technique for linear FO controllers. The controlled plants are discrete-time (DT) single-input single-output (SISO) linear time-invariant (LTI) systems. Although the proposed tuning approach relies on the ITAE minimization strategy, it needs only the \emph{one-shot} input/output data of the controlled plant. Thus, in contrast to conventional approaches, it does not require the iterative closed-loop control tests. The objective function of the proposed approach is derived by reformulating the ITAE minimization problem using a fictitious reference signal. The fictitious reference signal is constructed on the basis of the one-shot input/output data. The desired FO controller is obtained by minimizing the fictitious-reference-based objective function. A numerical example demonstrates that the proposed tuning technique can determine the FO controller for realizing the desired response in the same manner as the conventional approach while avoiding the iterative closed-loop tests.

Compared with the traditional approaches that focus on tuning of the FO controllers, the contribution and novelty of this study can be summarized as follows:
\begin{enumerate}[label=(C\arabic*)]
    \item \label{Contribution_MF}
    (\emph{Model-free}) The proposed approach tunes linear FO controllers using the input/output data of the controlled plant. Therefore, in contrast to traditional analytical approaches and optimization-based techniques via simulations, the proposed approach does not require mathematical models (e.g., the state-space model or transfer function model) of the controlled plant. Moreover, performance degradation due to the modeling error can be avoided, as the plant model itself is not required. 

    \item \label{Contribution_NI}
    (\emph{Non-iterative}) The required data collection is performed only once for the proposed approach. Thus, the proposed approach is free from the iterative closed-loop tests of the candidate controllers. Owing to its non-iterative nature, the proposed approach is superior to conventional optimization-based controller tuning, which involves multiple closed-loop tests. Thus, the proposed approach is simple and practical. 

    \item \label{Contribution_SI}
    (\emph{Simple implementation of FO control}) The model-free and non-iterative characteristics drastically reduce the development cost of FO controllers. Moreover, the proposed approach automatically searches for the desired FO controller by solving the numerical optimization problem; tedious manual controller tuning via trial-and-error is not required. Therefore, the proposed tuning technique can facilitate practical application of FO control by providing a novel controller tuning strategy. Note that controller tuning is a major challenge for implementing FO control in practice \cite{Tepljakov2021}.
\end{enumerate}

As stated in contributions \ref{Contribution_MF}--\ref{Contribution_SI}, this study provides a simple and practical implementation technique of linear FO controllers, bridging the gap between FO control theory and practice. Widespread use of FO control will substantially improve the performance of a wide variety of automatic control systems.

\subsection{Structure of the paper} \label{1_4_Structure}
The remainder of this paper is structured as follows. Section \ref{2_Preliminaries} summarizes some preliminaries. Section \ref{3_Proposed approach} introduces the proposed controller tuning approach. Section \ref{4_Numerical example} presents a numerical example to demonstrate the validity of the proposed approach. Finally, Section \ref{5_Conclusion} concludes the paper. 

\section{Preliminaries} \label{2_Preliminaries}

\subsection{Notation and symbols} \label{2_1_Notation}
The symbols $\mathbb{R}$ and $\mathbb{R}_{+}$ represent the sets of real numbers and strictly positive real numbers, respectively. 
The set of $n \times m$-dimensional matrices with real elements is denoted as $\mathbb{R}^{n \times m}$; let $\mathbb{R}^{n}$ denote $\mathbb{R}^{n \times 1}$ for simplicity.
For an $n$-dimensional real vector
$ x= \begin{bmatrix} x_{1} & x_{2} & \cdots & x_{n} \\ \end{bmatrix} ^ {\top}$,
let
$\mathrm{diag} \mleft( x \mright)$
denote the diagonal matrix, where the $i, i$-th component of 
$\mathrm{diag} \mleft( x \mright)$
is $x_{i}$.
Finally, the $p$-norm $\left\lVert x \right\rVert_{p}$ of $x$ is defined as 
$\left\lVert x \right\rVert_{p} \triangleq \mleft( \sum_{i=1}^{n} \left\lvert x_{i} \right\rvert ^{p} \mright)^\frac{1}{p}$.

The symbol $s$ represents the Laplace variable, whereas the $z$-variable is denoted as $z$. 
Let $\mathcal{R} \mleft[z\mright]$ denote the set of proper rational DT SISO LTI transfer functions. 
If a continuous-time (CT) transfer function $G\mleft( s \mright)$ has a tunable parameter $\phi$, we represent this parametrized transfer function explicitly as $G\mleft( s; \phi \mright)$. The analogous notation is used for DT transfer functions. 
For $G\mleft( s \mright)$, $\mathcal{Z} \mleft( G\mleft( s \mright) \mright)$ denotes its discretization. 
For a FO CT transfer function $G_{FO} \mleft( s \mright) $, 
$\mathcal{I} \mleft( G_{FO} \mleft( s \mright)  \mright)$
represents its IO-approximation (various IO-approximation techniques are available \cite{Deniz2020}).
Moreover, let
$\mathcal{Z I} \mleft( G_{FO} \mleft(s \mright)  \mright) \in \mathcal{R} \mleft[z\mright]$ 
denote the DT counterpart of $G_{FO} \mleft(s \mright)$, 
which can be obtained via, e.g., 
$\mathcal{Z}\mleft( \mathcal{I} \mleft( G_{FO}\mleft( s \mright) \mright)    \mright)$
or 
the impulse invariant discretization method \cite{Li2011}.

We use $\ast$ to denote the convolution operation. 
Specifically, the output $y = \mleft\{ y_{k} \mright\}_{k=0}^{N}$
of $G \mleft( z \mright) \in \mathcal{R} \mleft[ z \mright]$ due to an input $u = \mleft\{ u_{k} \mright\}_{k=0}^{N}$ is computed as
$y_{k} = G\mleft(z\mright) \ast u_{k} \triangleq \sum_{i=0}^{k} g_{i}u_{k-i} = \sum_{i=0}^{k} g_{k-i}u_{i} $,
where $\mleft\{ g_{k} \mright\}_{k=0}^{\infty} $ is the impulse response of $G\mleft(z\mright)$.

\subsection{FO controller tuning via ITAE minimization} \label{2_2_Tuning via ITAE minimization}
Figure \ref{Fig_ITAE_Problem} shows an overview of the FO controller tuning approach via ITAE minimization. In Figure \ref{Fig_ITAE_Problem}, $P \mleft(z\mright) \in \mathcal{R}\mleft[z\mright] $ is the plant to be controlled. The FO controller $C \mleft(z; \phi \mright) \in \mathcal{R}\mleft[z\mright] $ is given by 
$C \mleft( z; \phi \mright) = \mathcal{Z I} \mleft( C_{FO} \mleft( s; \phi \mright) \mright)$
, where the FO CT transfer function $C_{FO} \mleft(s; \phi \mright)$ has the tunable parameter $\phi$.
Let
$T \mleft(z; \phi \mright) \triangleq P\mleft( z \mright)C\mleft(z; \phi \mright) \mleft\{ 1 + P\mleft(z\mright) C\mleft(z; \phi\mright) \mright\}^{-1}$.
The setpoint reference, control input, and output are represented as 
$r = \mleft\{ r_{k} \mright\}_{k=0}^{\infty}$, 
$u = \mleft\{ u_{k} \mright\}_{k=0}^{\infty}$, 
$y = \mleft\{ y_{k} \mright\}_{k=0}^{\infty}$, 
respectively (these signals are evaluated for finite time steps in practice).

\begin{figure}[]
    \centering
    \includegraphics[width=0.45\textwidth]{./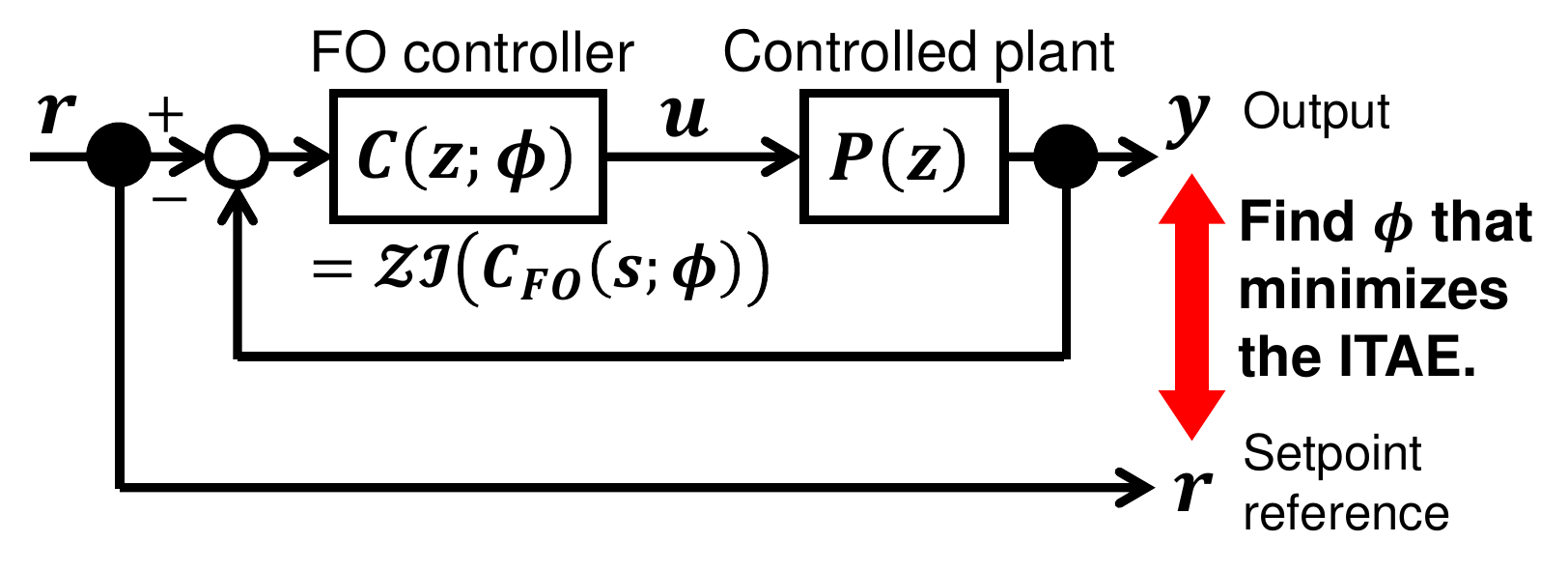}
    \caption{\protect\input{./Figures/Fig_1_Caption.tex}} 
    \label{Fig_ITAE_Problem}
\end{figure}  

Let $r$ be the step signal, i.e., $r_{k} = r^{0}$ for all $k$, where $r^{0} \in \mathbb{R}$. The ITAE performance index $J_{ITAE}$ in the DT manner is defined as
    \begin{equation} \label{Eq_J_ITAE_Conventional}
        J_{ITAE} \mleft( \phi \mright) \triangleq \sum_{k=0}^{N} \mleft( k\tau_{s} \mright) \left\lvert y_{k} \mleft( \phi \mright) - r^{0} \right\rvert 
,
    \end{equation}
    \begin{equation} \label{Eq_yk_T_rk}
        y_{k} \mleft( \phi \mright) = T \mleft( z; \phi \mright) \ast r_{k}
,
    \end{equation}
where $\tau_{s} \in \mathbb{R}_{+}$ is the sampling time and $N$ indicates the maximum evaluation step.  Note that the output $y_{k} \mleft( \phi \mright)$ is governed by the controller parameter $\phi$. Then, the parameter $\phi$ of the FO controller is tuned such that $J_{ITAE} \mleft( \phi \mright)$ is minimized.

Minimization of the ITAE is the predominant tuning approach for controllers, including FO controllers (e.g., \cite{Lu2023, Paliwal2022}). The ITAE criterion applies less penalty to errors in the early stage and increases the penalty as time progresses. This design choice mitigates the impact of early-stage tracking errors on the performance index and emphasizes the influence of errors occurring at the final stage. Placing less emphasis on early-stage errors helps to suppress overly aggressive control actions in response to unavoidable initial tracking errors, thereby reducing overshoot and undershoot. The strict evaluation of final-state errors helps reduce steady-state errors. Therefore, minimizing the ITAE results in reducing overshoot and undershoot and steady-state errors, yielding practically desirable closed-loop behavior. The details of various integral performance criteria, including the ITAE, have been discussed elsewhere (e.g., \cite{Das2011, Tavazoei2010, Dorf2021}).

Figure \ref{Fig_Flowchart_Conventioal} shows the conventional procedure for minimizing the ITAE using an optimization algorithm. Clearly, the traditional approach shown in Figure \ref{Fig_Flowchart_Conventioal} requires multiple closed-loop tests to evaluate the ITAE. Iterative closed-loop tests using a real-world experimental device are tedious. Executing model-based simulations for the closed-loop tests requires costly and burdensome plant modeling, and the inevitable modeling error causes unexpected performance degradation. Therefore, a novel paradigm is required for controller tuning via ITAE minimization in order to avoid the iterative closed-loop control tests. 

\begin{figure}[]
    \centering
    \includegraphics[width=0.42\textwidth]{./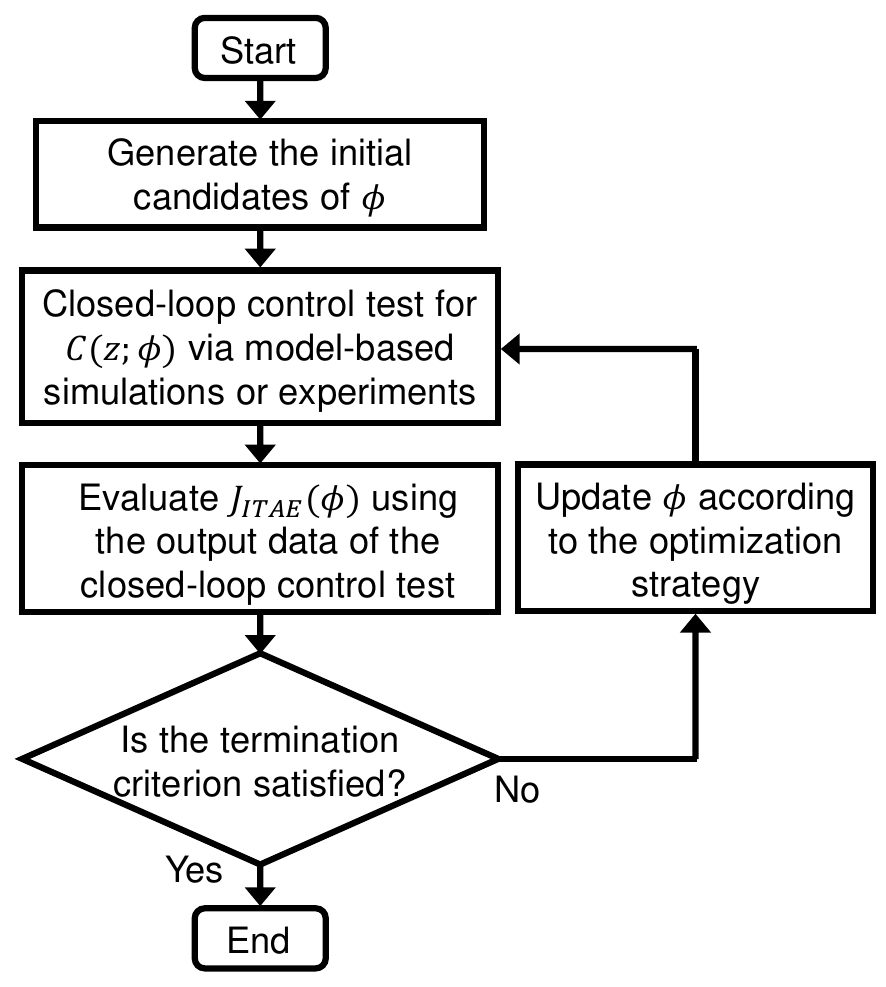}
    \caption{\protect\input{./Figures/Fig_2_Caption.tex}} 
    \label{Fig_Flowchart_Conventioal}
\end{figure}  

\section{Proposed approach} \label{3_Proposed approach}
To overcome the challenge posed by the conventional ITAE minimization approach, this section introduces the novel FO controller tuning strategy that can be executed using only one-shot input/output data of the controlled plant.

\subsection{Problem formulation} \label{3_1_Problem formulation}
This section states the FO controller tuning problem addressed in this study: ITAE minimization using only one-shot input/output data.  Here, it is assumed that we have the input data 
$u_{[0:N]}^D = \mleft\{ u_{k}^{D} \mright\}_{k=0}^{N}$
and the output data
$y_{[0:N]}^D = \mleft\{ y_{k}^{D} \mright\}_{k=0}^{N}$
that are obtained from either closed-loop or open-loop experiments. 
Without loss of generality, data collection is assumed to start when a nonzero input is applied to the controlled plant; that is, we assume $u_{0}^{D} \neq 0$.
The mathematical model of the controlled plant $P \mleft( z \mright)$ is assumed to be unknown. Then, we aim to find the optimal controller parameter $\phi^{\star}$ that minimizes the ITAE performance index \eqref{Eq_J_ITAE_Conventional} using 
$u_{[0:N]}^D$ and $y_{[0:N]}^D$.

\subsection{ITAE minimization using one-shot data: fictitious reference approach} \label{3_2_FR_ITAE_min}
This section describes the proposed FO controller tuning approach: fictitious-reference-based ITAE minimization (FR-ITAE-min). Specifically, controller tuning by minimizing the ITAE performance index \eqref{Eq_J_ITAE_Conventional} using 
$u_{[0:N]}^D$ and $y_{[0:N]}^D$ is as follows:
\begin{theorem}[FR-ITAE-min] \label{Theorem_FR_ITAE_min}
    The ITAE performance index \eqref{Eq_J_ITAE_Conventional} has the following data-driven representation:
     \begin{equation}\label{Eq_J_JD}
         J_{ITAE} \mleft( \phi \mright) = J_{ITAE}^{D} \mleft( \phi \mright)
,
     \end{equation}
     \begin{equation} \label{Eq_JD}
      \begin{split}  
        & J_{ITAE}^{D} \mleft( \phi \mright) \\
&= 
\left\lVert 
\mathrm{diag} \mleft( \bm{\tau}_{[0:N]} \mright)
\mleft\{
\bm{R^{0}} \mleft( \bm{\tilde{R}^{D}}\mleft( \phi \mright) \mright)^{-1}\bm{y^{D}}_{[0:N]}
-
r^{0} \bm{1}_{[0:N]}
\mright\}
\right\rVert_{1} 
,
      \end{split}
     \end{equation}
  where
    \begin{align}
        \bm{\tau}_{[0:N]} &=
\begin{bmatrix}
    0 & \tau_{s} & 2\tau_{s} & \cdots & N\tau_{s} 
 \end{bmatrix}^{\top}
, \label{Eq_tau_0_N} \\
        \bm{R^{0}} &=
\begin{bmatrix}
    r^{0}  & 0      & \cdots & 0 \\
    r^{0}  & r^{0}  & \ddots & 0 \\
    \vdots & \vdots & \ddots & 0 \\
    r^{0}  & r^{0}  & \cdots & r^{0}
 \end{bmatrix}
, \label{Eq_R_0} \\
        \bm{\tilde{R}^{D}}\mleft( \phi \mright) &=
\begin{bmatrix}
    \tilde{r}_{0}^{D} \mleft( \phi \mright)  & 0                    & \cdots & 0 \\
    \tilde{r}_{1}^{D} \mleft( \phi \mright) & \tilde{r}_{0}^{D} \mleft( \phi \mright)   & \ddots & 0 \\
    \vdots             & \vdots               & \ddots & 0 \\
    \tilde{r}_{N}^{D} \mleft( \phi \mright) & \tilde{r}_{N-1}^{D} \mleft( \phi \mright) & \cdots & \tilde{r}_{0}^{D} \mleft( \phi \mright)
 \end{bmatrix}
, \label{Eq_R_tilde} \\
        \tilde{r}_{k}^{D} \mleft( \phi \mright)
&= \mleft\{ C\mleft( z; \phi \mright) \mright\}^{-1} \ast u_{k}^{D} + y_{k}^{D}
, \label{Eq_r_k_tilde} \\
        \bm{y^{D}}_{[0:N]} &=
\begin{bmatrix}
    y_{0}^{D} & y_{1}^{D} & \cdots & y_{N}^{D} 
 \end{bmatrix}^{\top}
, \label{Eq_y_D_0_N} \\
        \bm{1}_{[0:N]} &=
\begin{bmatrix}
    1 & 1 & \cdots & 1 
 \end{bmatrix}^{\top}
, \label{Eq_ones_0_N} \\
    \end{align}
  provided that $\bm{\tilde{R}^{D}}\mleft( \phi \mright)$ is invertible. 
  Here, $\bm{R^{0}} \in \mathbb{R}^{\mleft(N+1 \mright) \times \mleft(N+1 \mright)}$ and $\bm{1}_{[0:N]} \in \mathbb{R}^{\mleft(N+1 \mright)}$.
  Then, we obtain the desired controller parameter $\phi^{\star}$ as
     \begin{equation} \label{Eq_argmin_JD}
       \phi^{\star} = \underset{\phi}{\operatorname{arg\,min}} \, J_{ITAE}^{D} \mleft( \phi \mright)
.
     \end{equation}
\end{theorem}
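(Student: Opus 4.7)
The plan is to follow a classical fictitious-reference argument, adapted so that the weighted $\ell_{1}$ form of the ITAE is preserved verbatim. First I would interpret \eqref{Eq_r_k_tilde} as the unique setpoint that, when applied to the closed loop consisting of controller $C\mleft(z;\phi\mright)$ and the (unknown) actual plant $P\mleft(z\mright)$, reproduces the recorded pair $\mleft(u^{D},y^{D}\mright)$. This is the key conceptual step: because the data originated from the true plant, the loop equations $y_{k}^{D}=P\mleft(z\mright)\ast u_{k}^{D}$ and, by construction, $u_{k}^{D}=C\mleft(z;\phi\mright)\ast\mleft(\tilde{r}_{k}^{D}\mleft(\phi\mright)-y_{k}^{D}\mright)$ hold simultaneously, so eliminating $u_{k}^{D}$ gives
\[
y_{k}^{D}=T\mleft(z;\phi\mright)\ast\tilde{r}_{k}^{D}\mleft(\phi\mright),\qquad k=0,1,\ldots,N.
\]

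Next I would translate this convolution identity into a Toeplitz matrix equation. Letting $\{t_{k}\mleft(\phi\mright)\}_{k\ge 0}$ denote the impulse response of $T\mleft(z;\phi\mright)$ and setting $\bm{t}_{[0:N]}\mleft(\phi\mright)=[t_{0}\mleft(\phi\mright),\ldots,t_{N}\mleft(\phi\mright)]^{\top}$, the lower-triangular Toeplitz structure of $\bm{\tilde{R}^{D}}\mleft(\phi\mright)$ in \eqref{Eq_R_tilde} gives $\bm{y^{D}}_{[0:N]}=\bm{\tilde{R}^{D}}\mleft(\phi\mright)\,\bm{t}_{[0:N]}\mleft(\phi\mright)$; under the assumed invertibility of $\bm{\tilde{R}^{D}}\mleft(\phi\mright)$ (equivalent to $\tilde{r}_{0}^{D}\mleft(\phi\mright)\ne 0$) this inverts to $\bm{t}_{[0:N]}\mleft(\phi\mright)=\bm{\tilde{R}^{D}}\mleft(\phi\mright)^{-1}\bm{y^{D}}_{[0:N]}$.

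Convolving $T\mleft(z;\phi\mright)$ with the step reference $r_{k}=r^{0}$ and recognising $\bm{R^{0}}$ in \eqref{Eq_R_0} as exactly the lower-triangular Toeplitz matrix associated with this step, \eqref{Eq_yk_T_rk} reads $\bm{y}_{[0:N]}\mleft(\phi\mright)=\bm{R^{0}}\,\bm{t}_{[0:N]}\mleft(\phi\mright)=\bm{R^{0}}\bm{\tilde{R}^{D}}\mleft(\phi\mright)^{-1}\bm{y^{D}}_{[0:N]}$. Subtracting $r^{0}\bm{1}_{[0:N]}$, pre-multiplying by $\mathrm{diag}\mleft(\bm{\tau}_{[0:N]}\mright)$ and taking the $\ell_{1}$ norm reproduces the right-hand side of \eqref{Eq_JD}; comparison with \eqref{Eq_J_ITAE_Conventional} then proves \eqref{Eq_J_JD}, and \eqref{Eq_argmin_JD} is the corresponding minimiser.

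The main obstacle is the identity $y_{k}^{D}=T\mleft(z;\phi\mright)\ast\tilde{r}_{k}^{D}\mleft(\phi\mright)$, because it is tempting but wrong to read it as ``feeding $\tilde{r}^{D}$ into the closed loop would produce $y^{D}$''---a reading that would secretly require a second experiment. The justification must instead be purely algebraic in $z$: on the \emph{already-recorded} data, $y^{D}=Pu^{D}$ and $u^{D}=C\mleft(\tilde{r}^{D}-y^{D}\mright)$ hold together, so $\mleft(1+PC\mright)y^{D}=PC\,\tilde{r}^{D}$ and hence $y^{D}=T\tilde{r}^{D}$. Once this conceptual step is secured, everything else reduces to routine Toeplitz and $\ell_{1}$ bookkeeping.
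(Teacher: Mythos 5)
Your proposal is correct and follows essentially the same route as the paper: express the step response as $\bm{R^{0}}\bm{t}(\phi)$, use the fictitious-reference identity $y_{k}^{D}=T\mleft(z;\phi\mright)\ast\tilde{r}_{k}^{D}\mleft(\phi\mright)$ (the paper's Lemma~\ref{Lemma_FicRef}) to recover $\bm{t}(\phi)=\bm{\tilde{R}^{D}}\mleft(\phi\mright)^{-1}\bm{y^{D}}_{[0:N]}$ from the Toeplitz system, and substitute into the weighted $\ell_{1}$ form of the ITAE. The only difference is that you explicitly derive the loop-equation argument behind Lemma~\ref{Lemma_FicRef}, which the paper merely cites, and you correctly flag that the identity is algebraic on the recorded data rather than a claim about a second experiment.
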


\begin{figure}[]
    \centering
    \includegraphics[width=0.42\textwidth]{./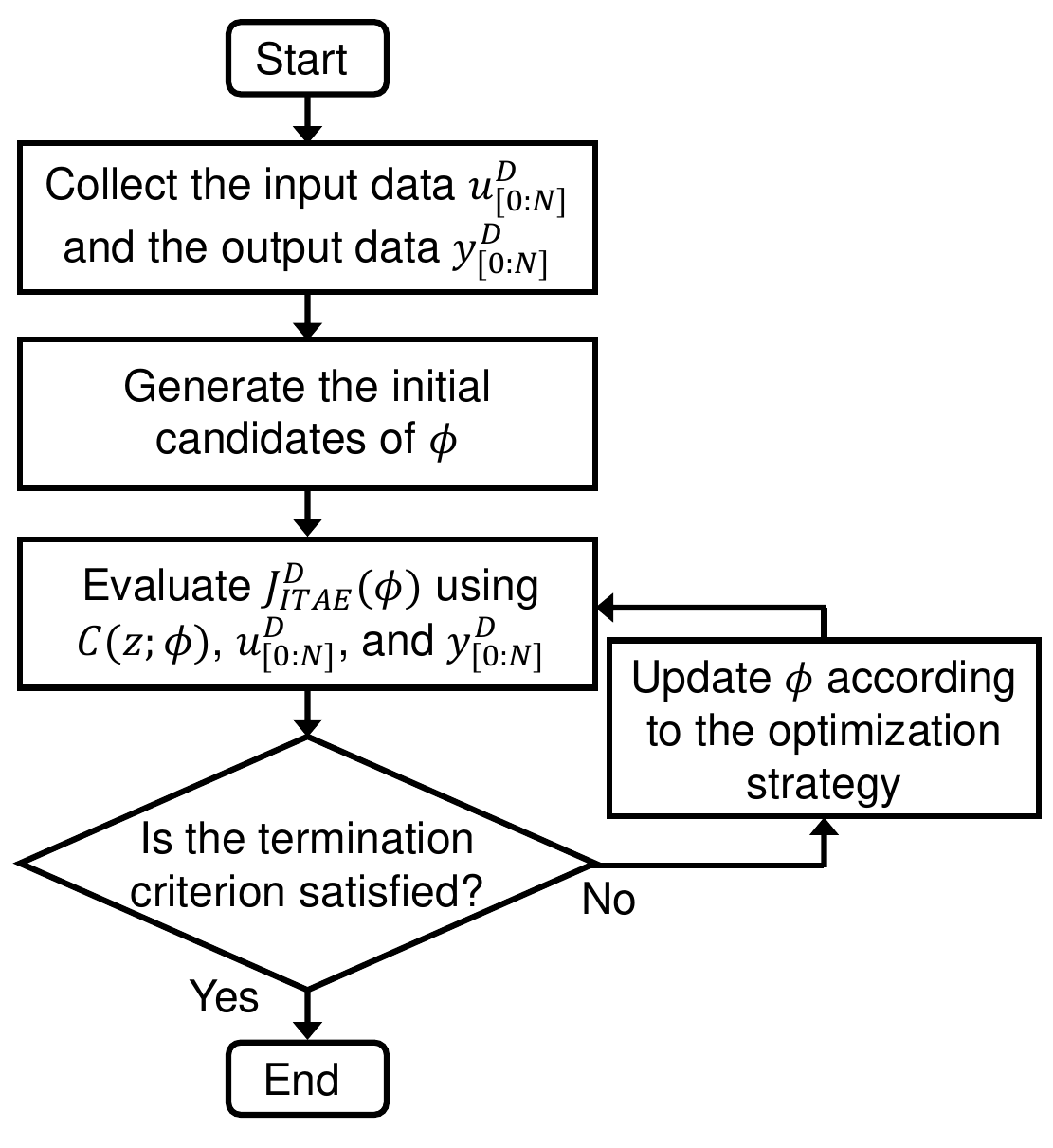}
    \caption{\protect\input{./Figures/Fig_3_Caption.tex}} 
    \label{Fig_Flowchart_Proposed}
\end{figure}  

The signal
$\tilde{r}^{D} \mleft( \phi \mright) = \mleft\{ \tilde{r}_{k}^{D} \mleft( \phi \mright) \mright\}_{k=0}^{N}$
is the so-called \emph{the fictitious reference signal} \cite{Safonov1997}. The fictitious reference signal has the following property, which is used to prove Theorem \ref{Theorem_FR_ITAE_min}:
\begin{lemma} \label{Lemma_FicRef}
    $T \mleft( z; \phi \mright) \ast \tilde{r}_{k}^{D} \mleft( \phi \mright) = y_{k}^{D}$
    for any $\phi$, provided that
    $1 + P \mleft( z \mright) C \mleft(z; \phi\mright) \neq 0$.
\end{lemma}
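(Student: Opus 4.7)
The plan is to prove the identity by direct algebraic manipulation of rational operators, using only the definition of $T(z;\phi)$, the definition of $\tilde r_k^D(\phi)$, and the fundamental fact that the collected data satisfies the plant relation $y_k^D = P(z)\ast u_k^D$ (which holds regardless of whether the experiment was conducted in open or closed loop, since $P(z)$ is the physical plant that generated $y^D$ from $u^D$). All manipulations will be carried out in the rational function algebra $\mathcal{R}[z]$, which is legitimate under the standard assumption of zero initial conditions for the data-collection experiment.

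First I would substitute $T(z;\phi)=P(z)C(z;\phi)\{1+P(z)C(z;\phi)\}^{-1}$ and the definition \eqref{Eq_r_k_tilde} of $\tilde r_k^D(\phi)$ into the left-hand side, and use linearity of convolution to split it as
\[
T(z;\phi)\ast\tilde r_k^D(\phi) = \frac{P(z)C(z;\phi)}{1+P(z)C(z;\phi)}\ast\bigl\{C(z;\phi)\bigr\}^{-1}\ast u_k^D + \frac{P(z)C(z;\phi)}{1+P(z)C(z;\phi)}\ast y_k^D.
\]
The $C(z;\phi)$ factors in the first term cancel, giving $P(z)\{1+P(z)C(z;\phi)\}^{-1}\ast u_k^D$.

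Next I would substitute $y_k^D = P(z)\ast u_k^D$ into the second term, so that both terms act on the common signal $u_k^D$. Combining over the common denominator $1+P(z)C(z;\phi)$ yields
\[
T(z;\phi)\ast\tilde r_k^D(\phi) = \frac{P(z)\bigl\{1+P(z)C(z;\phi)\bigr\}}{1+P(z)C(z;\phi)}\ast u_k^D,
\]
and the non-degeneracy assumption $1+P(z)C(z;\phi)\neq 0$ allows the factor to cancel, leaving $P(z)\ast u_k^D = y_k^D$, which is exactly the claim.

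There is no real obstacle here: the result is essentially a one-line identity once one recognizes that the fictitious reference $\tilde r_k^D(\phi)$ is, by construction, the signal whose image under the closed-loop map $T(z;\phi)$ reproduces the measured output. The only items requiring care are bookkeeping — namely, justifying that $\{C(z;\phi)\}^{-1}$ is a well-defined rational operator (so that $\tilde r_k^D(\phi)$ makes sense), and pointing out explicitly where the hypothesis $1+P(z)C(z;\phi)\neq 0$ is invoked, which is at the final cancellation step.
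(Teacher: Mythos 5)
Your proof is correct and is exactly the ``straightforward calculation'' the paper alludes to (it does not spell it out, deferring instead to Eq.~(8) of the cited Kaneko reference): substitute the definitions of $T\mleft(z;\phi\mright)$ and $\tilde{r}_{k}^{D}\mleft(\phi\mright)$, use $y_{k}^{D}=P\mleft(z\mright)\ast u_{k}^{D}$, and cancel the common factor $1+P\mleft(z\mright)C\mleft(z;\phi\mright)$. Your side remarks on zero initial conditions and on $\mleft\{C\mleft(z;\phi\mright)\mright\}^{-1}$ being a well-defined causal operator are the right caveats and are consistent with the paper's later discussion of $c_{0}^{inv}\mleft(\phi\mright)\neq 0$.
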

\noindent
Lemma \ref{Lemma_FicRef} can be proved by some straightforward calculations. See, e.g., Eq. (8) in \cite{Kaneko2013}. Focusing on the property stated in Lemma \ref{Lemma_FicRef}, several studies have examined data-driven controller design approaches using the fictitious reference signal (e.g., \cite{Kaneko2013, Baldi2011, Yahagi2024, Yonezawa2024}), although none of them considers the ITAE performance.
In addition, due to Lemma \ref{Lemma_FicRef}, $\tilde{r}_{k}^{D} \mleft( \phi \mright)$ can be interpreted as an inversion-based reference input for $T \mleft( z; \phi \mright)$ when the target output is $y_{k}^{D}$. For example, provided that $P\mleft(z \mright)$ is biproper, we have 
$\tilde{r}_{k}^{D} \mleft( \phi \mright) 
= \mleft\{ P\mleft(z \mright) C\mleft( z; \phi \mright) \mright\}^{-1} \ast y_{k}^{D} + y_{k}^{D}
= \mleft\{ T \mleft( z; \phi \mright) \mright\}^{-1} \ast y_{k}^{D}$.
One study has proposed an inversion-based control technique for FO systems \cite{Padula2014}.

Now, we are ready to prove Theorem \ref{Theorem_FR_ITAE_min}.
\begin{proof}[Proof of Theorem \ref{Theorem_FR_ITAE_min}] \label{Proof_Thm_FR_ITAE_min}
    Let 
$t \mleft( \phi \mright) = \mleft\{ t_{k} \mleft( \phi \mright) \mright\}_{k=0}^{N}$
be the impulse response of $T \mleft( z; \phi \mright)$ and 
    \begin{equation} \label{Eq_t_phi_vec}
        \bm{t} \mleft( \phi \mright) 
\triangleq
\begin{bmatrix} 
    t_{0} \mleft( \phi \mright) & 
    t_{1} \mleft( \phi \mright) & \cdots & 
    t_{N} \mleft( \phi \mright) \\ 
\end{bmatrix} ^ {\top}
.
    \end{equation}
From \eqref{Eq_yk_T_rk}, we have 
    \begin{equation} \label{Eq_y_phi}
        \bm{y}\mleft( \phi \mright) \triangleq
\begin{bmatrix}
    y_{0}\mleft( \phi \mright) & y_{1}\mleft( \phi \mright) & \cdots & y_{N}\mleft( \phi \mright) 
 \end{bmatrix}^{\top}
=
\bm{R^0}\bm{t}\mleft( \phi \mright)
.
    \end{equation}
Then, \eqref{Eq_J_ITAE_Conventional} can be represented as
    \begin{align} \label{Eq_J_ITAE_1_norm}
        J_{ITAE} \mleft( \phi \mright) 
&=
\left\lVert 
    \mathrm{diag}\mleft( \bm{\tau}_{[0:N]} \mright) \mleft\{ \bm{y}\mleft( \phi \mright) - r^{0}\bm{1}_{[0:N]} \mright\}
\right\rVert_{1}  \notag \\
&=
\left\lVert 
    \mathrm{diag}\mleft( \bm{\tau}_{[0:N]} \mright) \mleft\{ \bm{R^{0}}\bm{t}\mleft( \phi \mright) - r^{0}\bm{1}_{[0:N]} \mright\}
\right\rVert_{1}
.
    \end{align}
Here, Lemma \ref{Lemma_FicRef} admits the data-driven representation of $\bm{t} \mleft( \phi \mright) $ as follows:
    \begin{equation} \label{Eq_t_phi_Ry}
        \bm{t}\mleft( \phi \mright) 
=
\mleft\{ \bm{\tilde{R}^{D}} \mleft( \phi \mright) \mright\}^{-1} \bm{y^{D}}_{[0:N]}
. 
    \end{equation}
Finally, substituting \eqref{Eq_t_phi_Ry} into \eqref{Eq_J_ITAE_1_norm} completes the proof.
\end{proof}

The implication of Theorem \ref{Theorem_FR_ITAE_min} is that we can evaluate the ITAE performance of the FO controller with \emph{arbitrary} $\phi$ using only the \emph{one-shot} input/output data. In other words, the ITAE minimization can be executed without iterative closed-loop control tests, as described in Figure \ref{Fig_Flowchart_Proposed}. A comparison of Figures \ref{Fig_Flowchart_Conventioal} and \ref{Fig_Flowchart_Proposed} clearly shows that the proposed approach is simpler than the conventional approach in that the repeated closed-loop tests are avoided in FR-ITAE-min. In Theorem \ref{Theorem_FR_ITAE_min}, the input/output data is assumed to be noiseless. If only noisy data is available, various denoising techniques can be used to reduce the effect of the noise before FR-ITAE-min is executed (e.g., total variation denoising \cite{Yahagi2024, Yahagi2021a}, discrete Fourier transform for periodically extended data \cite{Sakai2022}). The effect of the data noise on the proposed approach is discussed in \ref{Appendix_Measurement_Noise}.

 In Theorem \ref{Theorem_FR_ITAE_min}, it is assumed that $\bm{\tilde{R}^{D}}\mleft( \phi \mright)$ is invertible. This assumption is not restrictive in practical applications. Note that $\bm{\tilde{R}^{D}}\mleft( \phi \mright)$ has a lower triangular structure, as shown in \eqref{Eq_R_tilde}. The invertibility of $\bm{\tilde{R}^{D}}\mleft( \phi \mright)$ is determined by its diagonal element
$\tilde{r}_{0}^{D} \mleft( \phi \mright) =\mleft\{ C\mleft( z; \phi \mright) \mright\}^{-1} \ast u_{0}^{D} + y_{0}^{D} = c_{0}^{inv}\mleft( \phi \mright) u_{0}^{D} + y_{0}^{D}$
, where $\mleft\{ c_{k}^{inv}\mleft( \phi \mright) \mright\}_{k=0}^{\infty}$ denotes the impulse response of $\mleft\{ C\mleft( z; \phi \mright) \mright\}^{-1}$ (note that $c_{0}^{inv}\mleft( \phi \mright) \neq 0$ because $C\mleft( z; \phi \mright)$ is proper). Thus, $\bm{\tilde{R}^{D}}\mleft( \phi \mright)$ is not invertible if and only if $y_{0}^{D} = -c_{0}^{inv}\mleft( \phi \mright) u_{0}^{D}$, i.e., $\tilde{r}_{0}^{D} \mleft( \phi \mright) = 0$. Here, as $y_{[0:N]}^{D}$ is the output of the controlled plant $P \mleft( z \mright)$ due to $u_{[0:N]}^{D}$, $y_{0}^{D} =P \mleft(z \mright) \ast u_{0}^{D} = p_{0} u_{0}^{D}$ holds, where $\mleft\{ p_{k} \mright\}_{k=0}^{\infty}$ denotes the impulse response of the controlled plant $P \mleft( z \mright)$. 
These facts confirm that $\bm{\tilde{R}^{D}}\mleft( \phi \mright)$ is not invertible only if $p_{0} = -c_{0}^{inv}\mleft( \phi \mright)$. In other words, the proposed approach is infeasible only for the values of $\phi$ that satisfy $p_{0} = -c_{0}^{inv}\mleft( \phi \mright)$. Note that the invertibility of $\bm{\tilde{R}^{D}}\mleft( \phi \mright)$ can be tested by evaluating the value of $\tilde{r}_{0}^{D} \mleft( \phi \mright) $, which can be conducted without knowing $p_{0}$ itself.
Moreover, $\phi$ that makes $\bm{\tilde{R}^{D}}\mleft( \phi \mright)$ nearly singular (i.e., $\tilde{r}_{0}^{D} \mleft( \phi \mright) $ is very small) should be excluded from consideration in practice, as it leads to undesired closed-loop behavior. Specifically, such a $\phi$ may result in a large value of $t_{0} \mleft( \phi \mright)$ when $y_{0}^{D}\neq 0$, since $t_{0} \mleft( \phi \mright) = \mleft( \tilde{r}_{0}^{D} \mleft( \phi \mright) \mright)^{-1} y_{0}^{D}$. This relationship follows from \eqref{Eq_t_phi_Ry} and the fact that $\mleft(\bm{\tilde{R}^{D}}\mleft( \phi \mright) \mright)^{-1}$ is a lower triangular matrix whose diagonal elements are equal to $\mleft( \tilde{r}_{0}^{D} \mleft( \phi \mright) \mright)^{-1}$. An excessively large value of $t_{0} \mleft( \phi \mright)$ results in an overly aggressive response of the closed-loop system, which should be avoided in practice. That is, $\phi$ that makes $\bm{\tilde{R}^{D}}\mleft( \phi \mright)$ nearly singular should not be used from a practical perspective. In summary, it is rare to encounter the singularity of $\bm{\tilde{R}^{D}}\mleft( \phi \mright)$; we can exclude the values of $\phi$ that make $\bm{\tilde{R}^{D}}\mleft( \phi \mright)$ nearly singular from a practical perspective. Consequently, the invertibility assumption of $\phi$ that make $\bm{\tilde{R}^{D}}\mleft( \phi \mright)$ does not pose a significant restriction when applying the proposed approach.

In \eqref{Eq_JD}, the simple time weight matrix $\mathrm{diag} \mleft( \bm{\tau}_{[0:N]} \mright)$ may lead to numerical issues when $N$ (i.e., the number of data points) is large. One simple technique for addressing this issue is to modify the time weights. For example, to avoid the numerical sensitivity issue caused by the rapid growth of time weights for large $N$, $J_{ITAE}^{D} \mleft( \phi \mright)$ can be modified to 
\begin{equation} \label{Eq_JD_Sat}
    \begin{split} 
        & J_{ITAE, sat}^{D} \mleft( \phi \mright) \\
& \triangleq  
\left\lVert 
\mathrm{diag} \mleft( \mathrm{Sat} \mleft(\bm{\tau}_{[0:N]}\mright) \mright)
\mleft\{
\bm{R^{0}} \mleft( \bm{\tilde{R}^{D}}\mleft( \phi \mright) \mright)^{-1}\bm{y^{D}}_{[0:N]}
-
r^{0} \bm{1}_{[0:N]}
\mright\}
\right\rVert_{1} 
,
    \end{split}
\end{equation}
where
\begin{align}
    \mathrm{Sat} \mleft(\bm{\tau}_{[0:N]}\mright)
& \triangleq 
\begin{bmatrix}
    \mathrm{Sat} \mleft(0 \mright) & \mathrm{Sat} \mleft( \tau_{s} \mright)  & \cdots & \mathrm{Sat} \mleft( N\tau_{s} \mright)
\end{bmatrix}^{\top}
, \label{Eq_Sat_tau_0_N} \\
    \mathrm{Sat} \mleft( \tau \mright)
& \triangleq 
\mleft( \frac{2}{\pi}\alpha \mright) \arctan \mleft( \frac{\pi}{2\alpha}\tau \mright)
, \label{Eq_Sat_tau} \\
\end{align}
and $\alpha$ is a user-specified positive scalar. The time weight $\mathrm{Sat} \mleft( \tau \mright)$ can address the numerical problem because its range is restricted, i.e., $\mathrm{Sat} \mleft( \tau \mright) < \alpha$ for nonnegative $\tau$. On the other hand, $\mathrm{Sat} \mleft( \tau \mright)$ is a good approximation of the original time weight for small $\tau$, because $\mathrm{Sat} \mleft( \tau \mright)$ is a monotonically increasing function and
\begin{align}
    \mathrm{Sat} \mleft( \tau \mright)
&\approx 
\mathrm{Sat} \mleft( 0 \mright) 
+ \left. \mleft( \frac{d}{d\tau} \mathrm{Sat} \mleft( \tau \mright) \mright)  \right|_{\tau=0 } \tau  \\
&= \tau
. \label{Eq_Sat_tau_Approximation}
\end{align}
Therefore, $\mathrm{Sat} \mleft( k\tau_{s} \mright)$ approximates the original time weight $k\tau_{s}$ in the earlier stage (i.e., the time step $k$ is small), while the rapid growth of the time weight is avoided caused by the increase in $N$. 
In addition, the proposed fictitious-reference-based approach can be extended to minimize the IAE; see \ref{Appendix_IAE} for details.

\begin{remark} \label{Remark_Versatility}
    The conventional analytical approaches have been developed for specific (typically very simple) classes of controlled plants and FO controllers. Meanwhile, FR-ITAE-min is applicable to a wide variety of the controlled plants and controllers including high-order plants.
Such versatility stems from the fact that the proposed approach relies not on mathematical models of controlled plants but on input and output data. 
From this viewpoint, the proposed approach is a highly versatile and powerful tuning technique for FO controllers. 
In addition, the present approach can also be directly applied to tuning IO controllers, such as the IO-PID controller.
Note that the proposed approach relies on the fact that the output of a linear system is given by the convolution of the input and the impulse response of the system. Thus, the proposed approach is not applicable to systems with strong nonlinearity. The extension of the proposed approach to nonlinear systems remains future work.

\end{remark}

\begin{remark} \label{Remark_Reliability}
    The traditional analytical approaches tune the ideal FO controller $C_{FO} \mleft( s; \phi \mright)$, while ignoring the effect of $\mathcal{Z I} \mleft( \cdot  \mright)$. Note that the transformation to the implementation form affects the control performance and stability \cite{Deniz2020}. Here, the proposed approach evaluates the ITAE performance for the DT counterpart 
$C \mleft( z; \phi \mright) = \mathcal{Z I} \mleft( C_{FO} \mleft( s; \phi \mright) \mright)$ 
of the FO controller. In other words, FR-ITAE-min tunes the \emph{ready-to-implement} FO controller. 
This feature enables us to be free from explicit consideration of the approximation error involved in transforming ideal FO controllers into their implementable form.
Therefore, the proposed approach allows the reliable implementation of the FO controller.
Moreover, the proposed approach accepts various approximation methods, which makes the present approach highly versatile.
\end{remark}

\begin{remark} \label{Remark_Convexity}
    The main drawback of the proposed approach is that the optimization problem \eqref{Eq_argmin_JD} is non-convex as in the case of traditional ITAE minimization problem. In the future, a convex relaxation of \eqref{Eq_argmin_JD} will be investigated. 
Here, whether there exists any convex relaxation/approximation or not significantly depends on the parametrization of the FO controller to be tuned. To exploit the advantage of linear FO controllers, users must determine not only their gain but also their integral/derivative order. This requirement implies that the convex relaxation or approximation is not straightforward, because the exponent $\gamma$ of the FO integrator/differentiator $s^{\gamma}$ in the Laplace domain serves as a tuning parameter.
Meanwhile, in the numerical example in this study, \eqref{Eq_argmin_JD} can be successfully solved via typical metaheuristic algorithms implemented in a commercial numerical software. 
It should be pointed out that, although the optimization problem \eqref{Eq_argmin_JD} for the proposed approach is non-convex, the cost for solving it is significantly reduced compared with traditional optimization-based FO controller tuning approaches. It is because, unlike the conventional approaches, the proposed approach does not require actually executing closed-loop control tests to evaluate the objective function value. 
\end{remark}


\section{Numerical example} \label{4_Numerical example}

\subsection{Overview of the validation} \label{4_1_Overview_validation}
\begin{table*}[]    
    \centering
    \begin{threeparttable}[h]
      \caption{\protectTuning strategies for the numerical example.} 
      \label{Table_Strategies} 
      \begin{tabular}{cccc}
        \hline
        \begin{tabular}{c}
\end{tabular} & \begin{tabular}{c}
    Exp-ITAE-min
\end{tabular} & \begin{tabular}{c}
    MB-ITAE-min
\end{tabular} & \begin{tabular}{c}
    FR-ITAE-min
\end{tabular} \\\hline
        \begin{tabular}{c}
Controlled plant for \\ closed-loop test
\end{tabular} & $P_{full} \mleft( z \mright)$ & $P_{reduced} \mleft( z \mright)$ & $P_{full} \mleft( z \mright)$ \\\hline
        \begin{tabular}{c}
Objective function
\end{tabular} & $J_{ITAE} \mleft( \phi \mright)$ & $J_{ITAE} \mleft( \phi \mright)$ & $J_{ITAE}^{D} \mleft( \phi \mright)$ \\\hline
        \begin{tabular}{c}
Purpose of the \\ closed-loop test
\end{tabular} & \begin{tabular}{c}
Objective function \\ evaluation
\end{tabular} & \begin{tabular}{c}
Objective function \\ evaluation
\end{tabular} & \begin{tabular}{c}
Initial data collection
\end{tabular} \\\hline
        \begin{tabular}{c}
Required number \\ of closed-loop tests
\end{tabular} & \begin{tabular}{c}
$4.5 \times 10^{4}$
\end{tabular}\tnote{$\dag$} & \begin{tabular}{c}
$4.5 \times 10^{4}$
\end{tabular}\tnote{$\dag$} & \begin{tabular}{c}
$1$
\end{tabular}\tnote{$\ddag$} \\\hline
        \begin{tabular}{c}
Implication
\end{tabular}
 & \begin{tabular}{c}
    Conventional \\ 
    approach based on \\ 
    repeated real-world \\ 
    closed-loop experiments
\end{tabular} & \begin{tabular}{c}
    Conventional \\ 
    approach based on\\
    iterative model-based \\ 
    simulations 
\end{tabular} & \begin{tabular}{c}
    Proposed approach \\ 
    with neither repeated \\ 
    experiments nor \\ 
    model-based simulations
\end{tabular} \\
        \hline
      \end{tabular}
    \begin{tablenotes}
    \item[$\dag$] Same as the total number of objective function evaluations
    \item[$\ddag$] Same as total number of initial experiments for data collection
    \end{tablenotes}
    \end{threeparttable}
\end{table*}

The validity of the proposed approach is demonstrated by a numerical study. Equation \eqref{Eq_P_full_z} describes the controlled plant $P_{full} \mleft( z \mright)$ in this example, which is a highly oscillatory high-order process system \cite{Wang2000, Das2018}:
    \begin{equation} \label{Eq_P_full_z}
        P_{full} \mleft(z \mright) = \mathcal{Z} \mleft( P_{full} \mleft( s \mright) \mright)
,
    \end{equation}
    \begin{equation}\label{Eq_P_full_s}
    P_{full} \mleft( s \mright) = \frac{1}{ \mleft(s^{2} + s + 1 \mright) \mleft(s + 3 \mright) } e^{-s}
,
    \end{equation}
where the Tustin method is adopted for $\mathcal{Z} \mleft( \cdot \mright)$ with the sampling time $\tau_{s} = 1.0\times10^{-2} \, \si{s}$. Note that $P_{full} \mleft( z \mright)$ serves as the actual controlled plant, and the exact model is assumed to be unavailable. As the controller $C \mleft(z; \phi \mright)$, we use the FO-PID controller as follows \cite{Dastjerdi2019}:
    \begin{equation} \label{Eq_C_z_phi}
        C \mleft( z; \phi \mright) = \mathcal{Z} \mleft( \mathcal{I} \mleft( C_{FOPID} \mleft( s; \phi \mright)  \mright)  \mright)
,
    \end{equation}
    \begin{equation} \label{Eq_C_FOPID_s}
        C_{FOPID} \mleft( s; \phi \mright) =
K_{fp}
+
K_{fi} \frac{1}{s^{\lambda}}
+
K_{fd} \frac{s^{\mu}}{1+\tau_{f}s^{\mu}}
,
    \end{equation}
    \begin{equation} \label{Eq_phi_design}
        \phi = 
\begin{bmatrix}
    K_{fp} & K_{fi} & \lambda & K_{fd} & \mu
\end{bmatrix}^{\top}
,\>
\tau_{f} = \tau_{s}
,

    \end{equation}
where the search ranges of $\phi$ are 
$K_{fp} \, K_{fi} \, K_{fd} \in \mleft[0 ,\> 10 \mright]$
and
$\lambda \, \mu \in \mleft[0 ,\> 2 \mright]$. For $\mathcal{Z} \mleft( \cdot \mright)$ in \eqref{Eq_C_z_phi}, the Tustin method is adopted, where $\tau_{s} = 1.0\times10^{-2} \, \si{s}$. The Oustaloup recursive filter \cite{Oustaloup2000} is applied to $\mathcal{I} \mleft( \cdot \mright)$ in \eqref{Eq_C_z_phi} 
(order: $5$, valid frequency range: $\mleft(1.0\times10^{-6} ,\, 10^{3} \mright) \, \si{rad/s}$). The IO approximation via the Oustaloup filter is performed using the FOMCON toolbox \cite{Tepljakov2011}.

In this example, the proposed approach, FR-ITAE-min, is compared with the traditional ITAE minimization strategy shown in Figure \ref{Fig_Flowchart_Conventioal}. Table \ref{Table_Strategies} summarizes the different controller tuning techniques. We employ the PSO algorithm implemented in the MATLAB Global Optimization Toolbox as the optimization solver, where the total number of the objective function evaluations is $4.5 \times 10^{4}$ and the population size is set to $150$.

In Table \ref{Table_Strategies}, Exp-ITAE-min is the conventional ITAE minimization approach with $P_{full} \mleft( z \mright)$ for the closed-loop tests with $r^{0} = 1$; it mimics the ITAE minimization procedure based on the closed-loop experiments using the real-world controlled plant. MB-ITAE-min is the conventional approach with $r^{0} = 1$ and $P_{reduced} \mleft( z \mright)$ shown in \eqref{Eq_P_reduced_z} for the closed-loop tests. Here, $P_{reduced} \mleft( s \mright)$ shown in \eqref{Eq_P_reduced_s} is the reduced-order model of $P_{full} \mleft( s \mright)$ in \eqref{Eq_P_full_s} \cite{Wang2000, Das2018}. MB-ITAE-min mimics the ITAE minimization strategy based on the closed-loop tests via the model-based simulations with $P_{reduced} \mleft( z \mright)$. In Exp-ITAE-min and MB-ITAE-min, the closed-loop tests are executed for $25 \, \si{s}$.
    \begin{equation} \label{Eq_P_reduced_z}
        P_{reduced} \mleft(z \mright) = \mathcal{Z} \mleft( P_{reduced} \mleft( s \mright) \mright)
    \end{equation}
    \begin{equation} \label{Eq_P_reduced_s}
        P_{reduced} \mleft( s \mright) = \frac{1}{ 3.2158s^{2} + 3.1614s + 3.0568 } e^{-1.28s}
    \end{equation}

FR-ITAE-min, i.e., the proposed approach, uses the input/output data shown in Figure \ref{Fig_Example_IO_Data} as $u_{[0:N]}^D$ and $y_{[0:N]}^D$. The data shown in Figure \ref{Fig_Example_IO_Data} is collected by a closed-loop test for $P_{full} \mleft( z \mright)$, where the controller parameter is 
$\phi^{0} = \begin{bmatrix} 1 & 0 & 1 & 0 & 1 \end{bmatrix}^{\top}$
and the setpoint $r^{0} = 1$. Note that not only the data shown in Figure \ref{Fig_Example_IO_Data} but also various input/output data can be used to execute the proposed approach.

\begin{remark} \label{Remark_Innovation}
    The innovation of the proposed approach is not the application of the PSO algorithm for tuning the FO controller but the novel data-driven formulation of the ITAE-based controller tuning. In contrast to Exp-ITAE-min/MB-ITAE-min, which is the typical PSO-based tuning strategy, the proposed approach requires neither repeated experiments nor iterative model-based simulations. The proposed approach relies on only the one-shot input/output data; the PSO algorithm is merely the solver for the optimization problem based on the one-shot data. Hence, the purpose of this numerical example is to demonstrate the validity of the novel formulation of the ITAE minimization based on the one-shot data.
\end{remark}

\begin{remark} \label{Remark_Parameter}
    In the proposed approach, we can freely choose the controller parameter to be tuned. This is because the objective function \eqref{Eq_JD} of the proposed approach does not depend on any particular choice of the controller parameters. Thus, if some controller parameters are known in advance based on prior knowledge of the controlled plant, they can be fixed at those values, and the proposed approach can be used to tune the remaining parameters. 
For example, if the evaluation interval is sufficiently long, minimizing the ITAE tends to result in $\lambda \geq 1$, in order to reduce the settling time. This is because $\lambda \in \mleft(0, \, 1\mright)$ results in a long settling time \cite{Dastjerdi2019}, which causes the ITAE value to become very large over a long evaluation interval. However, the evaluation interval may not be long enough in practice. This situation can lead to $\lambda \in \mleft(0, \, 1\mright)$, which results in a long settling time \cite{Dastjerdi2019}, depending on the characteristics of the controlled plant. To circumvent this issue, $\lambda$ can be fixed at $1$ rather than treated as a tuning parameter in the proposed approach.
\end{remark}

\begin{figure}[]
    \centering
    \includegraphics[width=0.4\textwidth]{./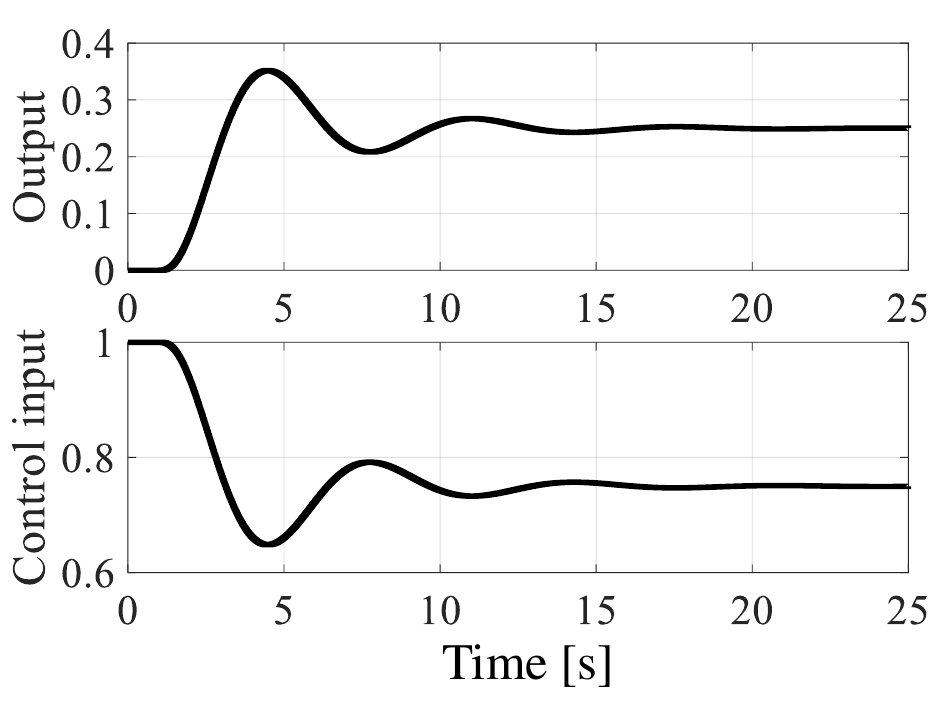}
    \caption{\protect\input{./Figures/Fig_4_Caption.tex}} \label{Fig_Example_IO_Data}
\end{figure}  

\subsection{Results and discussion} \label{4_2_Results_discussion}
Table \ref{Table_Results} summarizes the tuning result of Exp-ITAE-min, MB-ITAE-min, and FR-ITAE-min. Here, $J_{ITAE}$ of Exp-ITAE-min is computed using $P_{full} \mleft( z \mright)$, whereas $J_{ITAE}$ of MB-ITAE-min is computed using $P_{reduced} \mleft( z \mright)$. In FR-IDFRIT-min, $J_{ITAE}^{D}$ is evaluated using the data $u_{[0:N]}^{D}$ and $y_{[0:N]}^{D}$ collected from $P_{full} \mleft( z \mright)$. Figure \ref{Fig_Example_Control_Result} shows the control results obtained using the controllers tuned by Exp-ITAE-min, MB-ITAE-min, and FR-ITAE-min. The black line represents the setpoint reference signal. The blue, green, and magenta lines represent the controlled results obtained using the controllers tuned via Exp-ITAE-min, MB-ITAE-min, and FR-ITAE-min, respectively.

\begin{table*}[]    
    \centering
    \begin{threeparttable}[h]
      \caption{\protectSummary of the tuning results. FR-ITAE-min (proposed) provides nearly the same controller as Exp-ITAE-min (conventional, experiment-based), whereas the required number of the closed-loop experiments for FR-ITAE-min is significantly smaller than that for Exp-ITAE-min. As for MB-ITAE-min (conventional, model-based), the ITAE performance for the actual controlled plant $P_{full}$ is inferior to those provided by Exp-ITAE-min and FR-ITAE-min because of the gap between $P_{full}$ and its model $P_{reduced}$ for controller tuning.} 
      \label{Table_Results} 
      \begin{tabular}{cccc}
        \hline
        \begin{tabular}{c}
\end{tabular} & \begin{tabular}{c}
    Exp-ITAE-min
\end{tabular} & \begin{tabular}{c}
    MB-ITAE-min
\end{tabular} & \begin{tabular}{c}
    FR-ITAE-min
\end{tabular} \\\hline
        \begin{tabular}{c}
$K_{fp}$
\end{tabular} & $2.4278$ & $2.3087$ & $2.4280$ \\
        \begin{tabular}{c}
$K_{fi}$
\end{tabular} & $1.4777$ & $1.4768$ & $1.4777$ \\
        \begin{tabular}{c}
$\lambda$
\end{tabular} & $1.0044$ & $1.0038$ & $1.0044$ \\
        \begin{tabular}{c}
$K_{fd}$
\end{tabular} & $2.2934$ & $2.2154$ & $2.2934$ \\
        \begin{tabular}{c}
$\mu$
\end{tabular} & $1.2270$ & $1.1974$ & $1.2271$ \\
        \begin{tabular}{c}
Objective function \\ 
value via $\phi^{0}$
\end{tabular} & \begin{tabular}{c}
    $J_{ITAE} \mleft( \phi^{0} \mright) = 2.3463 \times 10^{4}$
\end{tabular}\tnote{$\dag$} & \begin{tabular}{c}
    $J_{ITAE} \mleft( \phi^{0} \mright) = 2.3572 \times 10^{4}$
\end{tabular}\tnote{$\ddag$} & \begin{tabular}{c}
    $J_{ITAE}^{D} \mleft( \phi^{0} \mright) = 2.3463 \times 10^{4}$
\end{tabular} \\
        \begin{tabular}{c}
Objective function \\ 
value via $\phi^{\star}$
\end{tabular} & \begin{tabular}{c}
    $J_{ITAE} \mleft( \phi^{\star} \mright) = 2.7252 \times 10^{2}$
\end{tabular}\tnote{$\dag$} & \begin{tabular}{c}
    $J_{ITAE} \mleft( \phi^{\star} \mright) = 2.7578 \times 10^{2}$
\end{tabular}\tnote{$\ddag$} & \begin{tabular}{c}
    $J_{ITAE}^{D} \mleft( \phi^{\star} \mright) = 2.7252 \times 10^{2}$
\end{tabular} \\
        \begin{tabular}{c}
ITAE performance \\ 
for $P_{full} \mleft( z \mright)$
\end{tabular} & \begin{tabular}{c}
    $2.7252 \times 10^{2}$
\end{tabular} & \begin{tabular}{c}
    $2.7651 \times 10^{2}$
\end{tabular} & \begin{tabular}{c}
    $ 2.7252 \times 10^{2}$
\end{tabular} \\
        \hline
      \end{tabular}
      \begin{tablenotes}
        \item[$\dag$] Computed using the actual controlled plant $P_{full} \mleft( z \mright)$
        \item[$\ddag$] Computed using the reduced order model $P_{reduced} \mleft( z \mright)$ of $P_{full} \mleft( z \mright)$
      \end{tablenotes}
      \end{threeparttable}
\end{table*} 

\begin{figure}[]
    \centering
    \includegraphics[width=0.47\textwidth]{./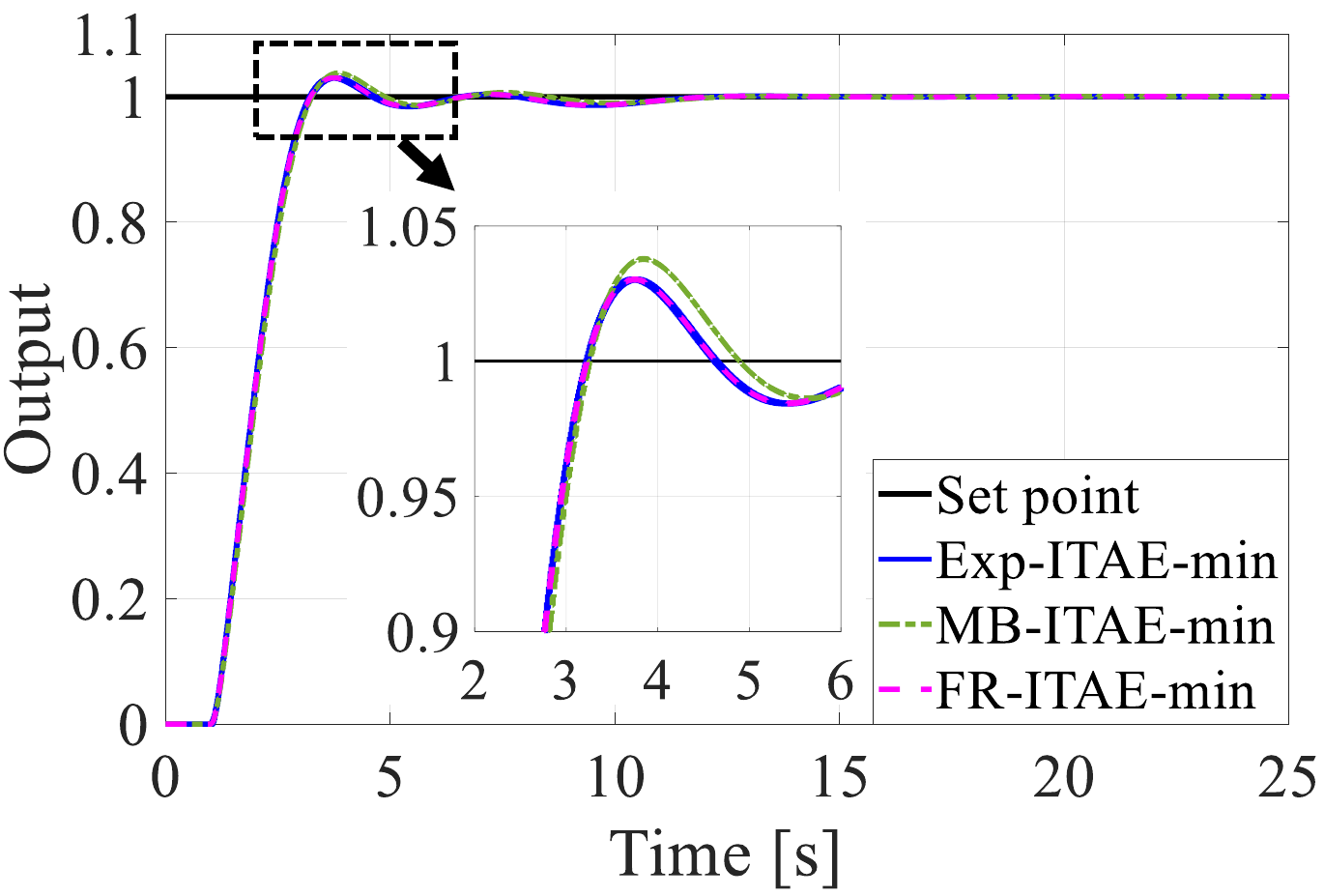}
    \caption{\protect\input{./Figures/Fig_5_Caption.tex}} 
    \label{Fig_Example_Control_Result}
\end{figure}  

Table \ref{Table_Results} and Figure \ref{Fig_Example_Control_Result} clearly show that all the tuning strategies reduce the objective function values and provide controllers that achieve good control performance for $P_{full} \mleft(z \mright)$, i.e., the actual controlled plant. A comparison of the results of Exp-ITAE-min and FR-ITAE-min confirms that the proposed approach provides a controller that is nearly identical to that provided by the conventional strategy, which designs the ideal controller on the basis of the iterative closed-loop tests using the real-world control plant. In fact, Table \ref{Table_Results} demonstrates that the controller parameter provided by FR-ITAE-min is nearly the same as that provided by Exp-ITAE-min. However, FR-ITAE-min significantly reduces the required number of closed-loop experiments, compared with Exp-ITAE-min. Therefore, from the viewpoint of simplicity, the proposed approach is superior to the traditional approach involving repeated closed-loop experiments for the real-world control system. Here, note that the value of $J_{ITAE} \mleft( \phi^{0} \mright)$ in Exp-ITAE-min is exactly the same as that of $J_{ITAE}^{D} \mleft( \phi^{0} \mright)$ in FR-ITAE-min. This implies that the $J_{ITAE}^{D}$ is a valid one-shot data-driven representation of $J_{ITAE}$. The justification of $J_{ITAE}^{D}$ can be also provided from the fact that the values of $J_{ITAE}^{D} \mleft( \phi^{\star} \mright)$ and the ITAE performance for $P_{full} \mleft( z \mright)$ in FR-ITAE-min is exactly the same. 

In contrast to Exp-ITAE-min, the value of $J_{ITAE} \mleft( \phi^{\star} \mright)$ in MB-ITAE-min is different from the ITAE performance for $P_{full} \mleft( z \mright)$. This is because MB-ITAE-min relies on the reduced-order model $P_{reduced} \mleft( z \mright)$ of the actual controlled plant $P_{full} \mleft( z \mright)$. In other words, this gap is caused by the modeling error in the plant model that is employed for controller tuning. Note that the modeling error causes unexpected performance degradation. In fact, Figure \ref{Fig_Example_Control_Result} and the ITAE performance in Table \ref{Table_Results} show that the controller due to MB-ITAE-min suffers from a larger overshoot and worse ITAE performance than those due to Exp-ITAE-min and FR-ITAE-min. Meanwhile, FR-ITAE-min is driven by the one-shot data obtained from the actual controlled plant; it is free from the modeling error as the plant model itself is not required (i.e., the model-free strategy). Therefore, the proposed approach can achieve better control performance than the traditional model-based approaches for the actual controlled plant.

The proposed approach requires only one-shot data, thereby avoiding plant modeling and multiple experiments. In the proposed approach, the desired controller is automatically found by solving the optimization problem; it does not require manual trial-and-error by the designer. Moreover, the control performance of the FO controller is evaluated in the ready-to-implement form in the proposed approach, ensuring reliability of the resulting controller. Consequently, owing to these useful features, the proposed approach can drastically reduce the implementation burden of linear FO control, as controller tuning is a major challenge in the implementation of FO controllers. Widespread use of FO control will significantly improve the performance and reliability of practical automatic control systems. 


\section{Conclusion} \label{5_Conclusion}
This study proposed a novel tuning technique for linear FO controllers. The proposed approach provides the optimal controller in the ITAE sense. In contrast to the conventional approaches, the proposed approach performs ITAE minimization without repeated closed-loop tests, which is the most important novelty of this study.  To avoid multiple closed-loop tests, we reformulated the ITAE minimization problem using the fictitious reference signal that is computed on the basis of the one-shot input/output data and the controller to be tested. The validity of the proposed approach was demonstrated by a numerical study. The proposed approach can provide the same FO controller as the conventional ITAE minimization strategy based on multiple closed-loop experiments using the actual control system. Therefore, the proposed approach is superior to the conventional one, as it avoids multiple closed-loop tests. Moreover, the controller obtained using the present approach achieves better control performance than that due to the traditional simulation-based optimization relying on the plant model. This result confirms that the proposed approach is superior to the traditional model-based approach from the viewpoint of the unnecessity of the plant model. In summary, this study has provided a simple and practical controller tuning technique for linear FO controllers.

In the future, the selection of the optimization solver for the proposed approach will be explored.

\section*{Conflict of interest}
The authors declare no potential conflict of interests.

\section*{Author contributions}
\textbf{Ansei Yonezawa}: Conceptualization; data curation; formal analysis; funding acquisition; investigation; methodology; software; validation; visualization; writing--original draft; writing--review \& editing.
\textbf{Heisei Yonezawa}: Conceptualization; methodology; investigation; writing--original draft; writing--review \& editing.
\textbf{Shuichi Yahagi}: Conceptualization; formal analysis; investigation; writing--review \& editing.
\textbf{Itsuro Kajiwara}: Conceptualization; funding acquisition; supervision; writing--review \& editing.
\textbf{Shinya Kijimoto}: Conceptualization; resource; writing--review \& editing.

\appendix    
\section{Effect of measurement noise}  \label{Appendix_Measurement_Noise}
In Theorem \ref{Theorem_FR_ITAE_min}, it is assumed that the data used in the proposed approach is noiseless. However, measured data is often corrupted by noise in real-world systems. In this appendix, we analyze the effect of measurement noise on the proposed approach. Specifically, we consider noise affecting the output data, as the output of the controlled plant is typically obtained through sensors, which often introduce sensor noise. Let
$y_{[0:N]}^{nD} = \mleft\{ y_{k}^{nD} \mright\}_{k=0}^{N}$
and
$n_{[0:N]}^{D} = \mleft\{ n_{k}^{D} \mright\}_{k=0}^{N}$
denote the noisy output data and the measurement noise, respectively, where $y_{k}^{nD} \triangleq  y_{k}^{D} +  n_{k}^{D}$. Note that $y_{k}^{D}$ is the clean output of the controlled plant, i.e., $y_{k}^{D} = P \mleft( z \mright) \ast u_{k}^{D}$. Substituting $y_{[0:N]}^{nD} $ into \eqref{Eq_JD} in Theorem \ref{Theorem_FR_ITAE_min} yields the performance index $J_{ITAE}^{nD} \mleft( \phi \mright)$ of FR-ITAE-min based on the noisy output data:
\begin{multline} 
    J_{ITAE}^{nD} \mleft( \phi \mright) \\
= 
\left\lVert
\mathrm{diag} \mleft( \bm{\tau}_{[0:N]} \mright)
\mleft\{
\bm{R^{0}} \mleft( \bm{\tilde{R}^{nD}}\mleft( \phi \mright) \mright)^{-1}\bm{y^{nD}}_{[0:N]}
-
r^{0} \bm{1}_{[0:N]}
\mright\}
\right\rVert_{1} 
, \label{Eq_JD_Noisy}
\end{multline} 
where
\begin{align}
    \bm{\tilde{R}^{nD}}\mleft( \phi \mright) &=
\begin{bmatrix}
    \tilde{r}_{0}^{nD} \mleft( \phi \mright) & 0                    & \cdots & 0 \\
    \tilde{r}_{1}^{nD} \mleft( \phi \mright) & \tilde{r}_{0}^{nD} \mleft( \phi \mright)   & \ddots & 0 \\
    \vdots             & \vdots               & \ddots & 0 \\
    \tilde{r}_{N}^{nD} \mleft( \phi \mright) & \tilde{r}_{N-1}^{nD} \mleft( \phi \mright) & \cdots & \tilde{r}_{0}^{nD} \mleft( \phi \mright)
 \end{bmatrix}
, \label{Eq_R_tilde_Noisy} \\
    \tilde{r}_{k}^{nD} \mleft( \phi \mright)
&= \mleft\{ C\mleft( z; \phi \mright) \mright\}^{-1} \ast u_{k}^{D} + y_{k}^{nD}
, \label{Eq_r_k_tilde_Noisy} \\
    \bm{y^{nD}}_{[0:N]} &=
\begin{bmatrix}
    y_{0}^{nD} & y_{1}^{nD} & \cdots & y_{N}^{nD} 
 \end{bmatrix}^{\top}
. \label{Eq_y_D_0_N_Noisy} \\
\end{align}
Then, the following proposition characterizes the effect of the measurement noise on the proposed approach:
\begin{proposition} \label{Proposition_FR_ITAE_min_Noisy}
    The noisy performance index $J_{ITAE}^{nD} \mleft( \phi \mright)$ shown in \eqref{Eq_JD_Noisy} can be represented as
\begin{equation}
    J_{ITAE}^{nD} \mleft( \phi \mright)
= 
\left\lVert
\bm{\ell}_{ITAE}^{D}\mleft( \phi \mright)
+
\bm{d}\mleft( \phi, n_{[0:N]}^{D} \mright)
\right\rVert_{1} 
, \label{Eq_JD_ITAE_Noisy_L_and_D}
\end{equation}
where
\begin{equation}
    \bm{\ell}_{ITAE}^{D} \mleft( \phi \mright)
= 
\mathrm{diag} \mleft( \bm{\tau}_{[0:N]} \mright)
\mleft\{
\bm{R^{0}} \mleft( \bm{\tilde{R}^{D}}\mleft( \phi \mright) \mright)^{-1}\bm{y^{D}}_{[0:N]}
-
r^{0} \bm{1}_{[0:N]}
\mright\}
, \label{Eq_LD_ITAE}
\end{equation}
\begin{multline} 
    \bm{d}\mleft( \phi, n_{[0:N]}^{D} \mright) 
=
\mathrm{diag} \mleft( \bm{\tau}_{[0:N]} \mright) \bm{R^{0}} \mleft( \bm{\tilde{R}^{D}}\mleft( \phi \mright) \mright)^{-1}\bm{n^{D}}_{[0:N]} \\
- \bm{H}\mleft( \phi, n_{[0:N]}^{D} \mright) \mleft( \bm{\tilde{R}^{D}}\mleft( \phi \mright) \mright)^{-1} 
   \mleft(\bm{y^{D}}_{[0:N]} + \bm{n^{D}}_{[0:N]} \mright)
, \label{Eq_D_phi_n}
\end{multline} 
\begin{equation}
    \bm{n^{D}}_{[0:N]} =
\begin{bmatrix}
    n_{0}^{D} & n_{1}^{D} & \cdots & n_{N}^{D} 
 \end{bmatrix}^{\top}
, \label{Eq_n_D_0_N}
\end{equation}
\begin{multline}
    \bm{H}\mleft( \phi, n_{[0:N]}^{D} \mright) \\
= 
\mathrm{diag} \mleft( \bm{\tau}_{[0:N]} \mright) \bm{R^{0}} \bm{Q} \mleft( \phi, n_{[0:N]}^{D} \mright) 
\mleft\{ \bm{I} + \bm{Q} \mleft( \phi, n_{[0:N]}^{D} \mright) \mright\}^{-1}
, \label{Eq_H_phi_n}
\end{multline}
\begin{equation}
    \bm{Q} \mleft( \phi, n_{[0:N]}^{D} \mright)
= 
\mleft( \bm{\tilde{R}^{D}}\mleft( \phi \mright) \mright)^{-1} \bm{N^{D}}
, \label{Eq_Q_phi_n}
\end{equation}
\begin{equation}
    \bm{N^{D}} =
\begin{bmatrix}
    n_{0}^{D}  & 0      & \cdots & 0 \\
    n_{1}^{D}  & n_{0}^{D}  & \ddots & 0 \\
    \vdots & \vdots & \ddots & 0 \\
    n_{N}^{D}  & n_{N-1}^{D}   & \cdots & n_{0}^{D}
 \end{bmatrix}
, \label{Eq_N_D}
\end{equation}
and $\bm{I}$ denotes the identity matrix with the compatible dimension.
\end{proposition}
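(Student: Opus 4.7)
The plan is to express all noisy quantities appearing in \eqref{Eq_JD_Noisy} as additive perturbations of their noiseless counterparts from Theorem~\ref{Theorem_FR_ITAE_min}, apply a standard matrix inversion identity to the perturbed Toeplitz matrix, and collect like terms to recover exactly $\bm{\ell}_{ITAE}^D(\phi) + \bm{d}(\phi, n_{[0:N]}^D)$. First, using the linearity of the convolution in \eqref{Eq_r_k_tilde_Noisy} and the lower-triangular Toeplitz structure shared by $\bm{\tilde{R}^{D}}(\phi)$ and $\bm{\tilde{R}^{nD}}(\phi)$, I would observe that $\tilde{r}_k^{nD}(\phi) = \tilde{r}_k^{D}(\phi) + n_k^D$, whence $\bm{\tilde{R}^{nD}}(\phi) = \bm{\tilde{R}^{D}}(\phi) + \bm{N^{D}}$ with $\bm{N^{D}}$ as defined in \eqref{Eq_N_D}. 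Combined with the obvious identity $\bm{y^{nD}}_{[0:N]} = \bm{y^{D}}_{[0:N]} + \bm{n^{D}}_{[0:N]}$, this reduces the claim to a purely linear-algebraic bookkeeping problem.

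Next, I would factor $\bm{\tilde{R}^{nD}}(\phi) = \bm{\tilde{R}^{D}}(\phi)(\bm{I} + \bm{Q}(\phi, n_{[0:N]}^D))$ using the definition \eqref{Eq_Q_phi_n}, and then invoke the elementary identity $(\bm{I} + \bm{Q})^{-1} = \bm{I} - \bm{Q}(\bm{I} + \bm{Q})^{-1}$, which is verified by multiplying both sides by $\bm{I} + \bm{Q}$. This yields
\[
(\bm{\tilde{R}^{nD}}(\phi))^{-1} = (\bm{\tilde{R}^{D}}(\phi))^{-1} - \bm{Q}(\bm{I}+\bm{Q})^{-1}(\bm{\tilde{R}^{D}}(\phi))^{-1}.
\]
Substituting this expression together with $\bm{y^{nD}}_{[0:N]} = \bm{y^{D}}_{[0:N]} + \bm{n^{D}}_{[0:N]}$ into the argument of the $\ell_1$ norm in \eqref{Eq_JD_Noisy} and distributing products, the noiseless contribution $\mathrm{diag}(\bm{\tau}_{[0:N]})\bm{R^{0}}(\bm{\tilde{R}^{D}}(\phi))^{-1}\bm{y^{D}}_{[0:N]} - r^0 \bm{1}_{[0:N]}$ reassembles into $\bm{\ell}_{ITAE}^{D}(\phi)$. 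The remaining cross terms collect into $\mathrm{diag}(\bm{\tau}_{[0:N]})\bm{R^{0}}(\bm{\tilde{R}^{D}}(\phi))^{-1}\bm{n^{D}}_{[0:N]}$ on one hand, and $-\mathrm{diag}(\bm{\tau}_{[0:N]})\bm{R^{0}}\bm{Q}(\bm{I}+\bm{Q})^{-1}(\bm{\tilde{R}^{D}}(\phi))^{-1}(\bm{y^{D}}_{[0:N]} + \bm{n^{D}}_{[0:N]})$ on the other; the latter coincides with $-\bm{H}(\phi, n_{[0:N]}^D)(\bm{\tilde{R}^{D}}(\phi))^{-1}(\bm{y^{D}}_{[0:N]} + \bm{n^{D}}_{[0:N]})$ by \eqref{Eq_H_phi_n}, so the total matches \eqref{Eq_D_phi_n}. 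Taking the $\ell_1$ norm of both sides then concludes the argument.

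The only subtlety I would need to address is the well-definedness of the factorization in the previous step; namely, $\bm{I} + \bm{Q}(\phi, n_{[0:N]}^D)$ must be invertible for the inversion identity to apply. This is automatic under the assumptions already in force: since both $\bm{\tilde{R}^{D}}(\phi)$ and $\bm{\tilde{R}^{nD}}(\phi)$ are invertible by the hypothesis inherited from Theorem~\ref{Theorem_FR_ITAE_min}, the multiplicative decomposition $\bm{\tilde{R}^{nD}}(\phi) = \bm{\tilde{R}^{D}}(\phi)(\bm{I} + \bm{Q})$ forces $\det(\bm{I}+\bm{Q}) = \det(\bm{\tilde{R}^{nD}}(\phi))/\det(\bm{\tilde{R}^{D}}(\phi)) \neq 0$. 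Beyond this mild invertibility check, the remainder of the argument is mechanical algebra with no analytical obstacles.
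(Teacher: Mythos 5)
Your proof is correct and follows essentially the same route as the paper: the additive decomposition $\bm{\tilde{R}^{nD}}(\phi) = \bm{\tilde{R}^{D}}(\phi) + \bm{N^{D}}$, the resulting perturbed-inverse expression, and the term collection are all identical; the paper merely cites the Sherman--Morrison--Woodbury formula where you derive the same inverse via the factorization $\bm{\tilde{R}^{D}}(\phi)(\bm{I}+\bm{Q})$ and the identity $(\bm{I}+\bm{Q})^{-1} = \bm{I} - \bm{Q}(\bm{I}+\bm{Q})^{-1}$. Your explicit check that $\bm{I}+\bm{Q}$ is invertible is a small but welcome addition that the paper leaves implicit.
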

\begin{proof} \label{Proof_Prop_FR_ITAE_min_Noisy}
    The noisy fictitious reference $\tilde{r}_{k}^{nD} \mleft( \phi \mright)$ can be represented as 
\begin{align}
    \tilde{r}_{k}^{nD} \mleft( \phi \mright)
&= \mleft\{ C\mleft( z; \phi \mright) \mright\}^{-1} \ast u_{k}^{D} + \mleft( y_{k}^{D} + n_{k}^{D} \mright) \\
&= \tilde{r}_{k}^{D} \mleft( \phi \mright) + n_{k}^{D}
. \label{Eq_R_tilde_Noisy_align}
\end{align}
Substituting \eqref{Eq_R_tilde_Noisy_align} into \eqref{Eq_R_tilde_Noisy} yields that 
\begin{equation}
    \bm{\tilde{R}^{nD}}\mleft( \phi \mright) 
=
\bm{\tilde{R}^{D}}\mleft( \phi \mright) 
+
\bm{{N}^{D}}
. \label{Eq_R_tilde_Noisy_R_tilde_N_D}
\end{equation}
According to the Sherman--Morrison--Woodbury formula (subsection 2.1.4 in \cite{Golub2013}), 
\begin{multline} 
    \mleft( \bm{\tilde{R}^{nD}}\mleft( \phi \mright) \mright)^{-1} 
= \mleft( \bm{\tilde{R}^{D}}\mleft( \phi \mright) + \bm{{N}^{D}} \mright) ^{-1} \\
\shoveleft{=\mleft( \bm{\tilde{R}^{D}}\mleft( \phi \mright) \mright) ^{-1}} \\
- \mleft( \bm{\tilde{R}^{D}}\mleft( \phi \mright) \mright) ^{-1} \bm{{N}^{D}} 
   \mleft\{ \bm{I} + \mleft( \bm{\tilde{R}^{D}}\mleft( \phi \mright) \mright) ^{-1} \bm{{N}^{D}} \mright\}^{-1} 
   \mleft( \bm{\tilde{R}^{D}} \mleft( \phi \mright) \mright) ^{-1} \\
\shoveleft{= \mleft( \bm{\tilde{R}^{D}}\mleft( \phi \mright) \mright) ^{-1} } \\
 - \bm{Q} \mleft( \phi, n_{[0:N]}^{D} \mright)
   \mleft\{ \bm{I} + \bm{Q} \mleft( \phi, n_{[0:N]}^{D} \mright) \mright\}^{-1}
   \mleft( \bm{\tilde{R}^{D}}\mleft( \phi \mright) \mright) ^{-1}
.
 \label{Eq_R_tilde_Noisy_SMW_multline}
\end{multline} 
Here, 
\begin{equation}
    \bm{y^{nD}}_{[0:N]} = \bm{y^{D}}_{[0:N]} + \bm{n^{D}}_{[0:N]}
. \label{Eq_y_D_0_N_Noisy_yD_nD}
\end{equation}
Substituting \eqref{Eq_R_tilde_Noisy_SMW_multline} and \eqref{Eq_y_D_0_N_Noisy_yD_nD} into \eqref{Eq_JD_Noisy} provides the desired conclusion.
\end{proof}

Note that the performance index \eqref{Eq_JD} of FR-ITAE-min based on the clean output data can be represented as 
$J_{ITAE}^{D} \mleft( \phi \mright) = \left\lVert \bm{\ell}_{ITAE}^{D} \mleft( \phi \mright) \right\rVert_{1}$.
Therefore, compared with the noiseless case, the measurement noise on the output data produces the bias $\bm{d}\mleft( \phi, n_{[0:N]}^{D} \mright)$, which depends on both the controller parameter $\phi$ to be tuned and the noise $n_{[0:N]}^{D}$ corrupting the output data. 
Moreover, the existence of the bias implies that, when the output data contains measurement noise, the estimated impulse response $\bm{t}^{n} \mleft( \phi \mright) $ of $T \mleft( z; \phi \mright)$ computed as in \eqref{Eq_t_phi_Ry}, i.e., 
$\bm{t}^{n}\mleft( \phi \mright) 
=
\mleft\{ \bm{\tilde{R}^{nD}} \mleft( \phi \mright) \mright\}^{-1} \bm{y^{nD}}_{[0:N]}$
, contains estimation error.

If the noise on the output data is intensive, the data should be pre-processed by some suitable denoising techniques (e.g., low-pass filtering, total variation denoising \cite{Yahagi2024, Yahagi2021a}, discrete Fourier transform for periodically extended data \cite{Sakai2022}) to attenuate the effect of $\bm{d}\mleft( \phi, n_{[0:N]}^{D} \mright)$ over the search space of $\phi$. Moreover, we will explore a novel reformulation of the optimization problem to enhance robustness to the noise as future work.

\section{Extension of the proposed approach to the minimization of the IAE }  \label{Appendix_IAE}
The IAE, as well as the ITAE, is often used as a performance index for tuning FO controllers \cite{Padula2012, Padula2024}.
In the DT setting, the IAE performance index $J_{IAE} \mleft( \phi \mright)$ is defined as follows:
\begin{equation} \label{Eq_J_IAE_Conventional}
    J_{IAE} \mleft( \phi \mright) \triangleq \sum_{k=0}^{N} \left\lvert y_{k} \mleft( \phi \mright) - r^{0} \right\rvert 
.
\end{equation}
Previous studies have thoroughly discussed the characteristics of the ITAE and IAE criteria (e.g., \cite{Das2011}; see also Section 5.7 of \cite{Dorf2021}).

We can extend the proposed fictitious-reference-based approach to minimize the IAE as follows:
\begin{proposition} \label{Proposition_FR_IAE_min}
    The IAE performance index \eqref{Eq_J_IAE_Conventional} has the following data-driven representation:
     \begin{equation}\label{Eq_J_JD_IAE}
         J_{IAE} \mleft( \phi \mright) = J_{IAE}^{D} \mleft( \phi \mright)
,
     \end{equation}
     \begin{equation} \label{Eq_JD_IAE}
        J_{IAE}^{D} \mleft( \phi \mright)
= 
\left\lVert 
\bm{R^{0}} \mleft( \bm{\tilde{R}^{D}}\mleft( \phi \mright) \mright)^{-1}\bm{y^{D}}_{[0:N]}
-
r^{0} \bm{1}_{[0:N]}
\right\rVert_{1} 
,
     \end{equation}
provided that $\bm{\tilde{R}^{D}}\mleft( \phi \mright)$ is invertible.
\end{proposition}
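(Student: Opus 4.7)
The plan is to mirror the proof of Theorem \ref{Theorem_FR_ITAE_min} almost verbatim, simply dropping the time-weight factor $\mathrm{diag}(\bm{\tau}_{[0:N]})$. Since the IAE index \eqref{Eq_J_IAE_Conventional} differs from the ITAE index \eqref{Eq_J_ITAE_Conventional} only by the absence of the time weights $k\tau_s$, most of the machinery already assembled for Theorem \ref{Theorem_FR_ITAE_min} carries over unchanged.

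First I would rewrite \eqref{Eq_J_IAE_Conventional} in vector form as
\begin{equation}
    J_{IAE}(\phi) = \left\lVert \bm{y}(\phi) - r^{0}\bm{1}_{[0:N]} \right\rVert_{1},
\end{equation}
where $\bm{y}(\phi)$ is defined as in \eqref{Eq_y_phi}. Next, I would invoke \eqref{Eq_y_phi}, which expresses $\bm{y}(\phi)$ as $\bm{R^0}\bm{t}(\phi)$, to obtain
\begin{equation}
    J_{IAE}(\phi) = \left\lVert \bm{R^{0}}\bm{t}(\phi) - r^{0}\bm{1}_{[0:N]} \right\rVert_{1}.
\end{equation}
This is exactly the analogue of \eqref{Eq_J_ITAE_1_norm} with the diagonal time-weight removed.

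The final step is to replace $\bm{t}(\phi)$ by its data-driven expression \eqref{Eq_t_phi_Ry}, which is a direct consequence of Lemma \ref{Lemma_FicRef} and holds under the stated invertibility assumption on $\bm{\tilde{R}^{D}}(\phi)$. Substitution yields
\begin{equation}
    J_{IAE}(\phi) = \left\lVert \bm{R^{0}}\bigl(\bm{\tilde{R}^{D}}(\phi)\bigr)^{-1}\bm{y^{D}}_{[0:N]} - r^{0}\bm{1}_{[0:N]} \right\rVert_{1},
\end{equation}
which matches \eqref{Eq_JD_IAE} and completes the argument.

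I do not anticipate any real obstacle: there is no analogue of the time-weight-growth issue addressed in \eqref{Eq_JD_Sat}, and Lemma \ref{Lemma_FicRef} together with \eqref{Eq_t_phi_Ry} supplies the key convolution-matrix identity. The only delicate point to reiterate in the write-up is the invertibility hypothesis on $\bm{\tilde{R}^{D}}(\phi)$, which is discussed at length after Theorem \ref{Theorem_FR_ITAE_min} and applies here without modification since $\bm{\tilde{R}^{D}}(\phi)$ is the same lower-triangular Toeplitz matrix built from the fictitious reference signal.
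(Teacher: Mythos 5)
Your proposal is correct and follows exactly the route the paper intends: the paper's own proof of Proposition \ref{Proposition_FR_IAE_min} is simply the remark that it "can be proven using the same approach as in Theorem \ref{Theorem_FR_ITAE_min}," and your write-up spells out precisely that argument (the vector form of \eqref{Eq_J_IAE_Conventional}, the substitution of \eqref{Eq_y_phi}, and the data-driven replacement \eqref{Eq_t_phi_Ry} under the invertibility assumption). No issues.
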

\begin{proof}
    It can be proven using the same approach as in Theorem \ref{Theorem_FR_ITAE_min}.
\end{proof}

Note that, similar to the case of the ITAE, \eqref{Eq_JD_IAE} can be evaluated using a single set of the input data $u_{[0:N]}^D$
and the output data $y_{[0:N]}^D$. Therefore, by minimizing $J_{IAE}^{D} \mleft( \phi \mright)$ instead of $J_{IAE} \mleft( \phi \mright)$, we can avoid not only plant but also iterative closed-loop control tests for controller performance assessment.

\bibliographystyle{ieeetr}
\bibliography{References_ASJC_Ver1_R2.bib}

\end{document}